\documentclass[12pt]{article}

\usepackage{amsmath}
\usepackage{amssymb}
\usepackage{graphicx}
\usepackage{booktabs}
\usepackage{setspace}
\usepackage{booktabs}
\usepackage{siunitx}
\usepackage{pgfplots}
\pgfplotsset{compat=1.18}
\usepackage{tikz}
\usepackage{threeparttable}
\usepackage{geometry}
\usepackage{xcolor}
\usepackage{natbib}
\usepackage{amsthm}
\usepackage{amsmath}
\numberwithin{equation}{section}

\newtheorem{proposition}{Proposition}[section]
\newtheorem{corollary}{Corollary}[section]

\usepackage{xcolor}
\definecolor{citeblue}{RGB}{0,160,210}

\usepackage[colorlinks=true,
            linkcolor=black,
            citecolor=citeblue,
            urlcolor=citeblue]{hyperref}

\let\oldcitep\citep
\renewcommand{\citep}[1]{\textcolor{citeblue}{\oldcitep{#1}}}

\title{Nonlinear Fiscal Transitions and the Dynamics of Public Expenditure Reform}

\author{
Diego Vallarino\thanks{
This manuscript is an independent academic work. 
All views, interpretations, and conclusions expressed are strictly personal 
and do not reflect the positions of any institution, employer, 
or organization with which the author is affiliated. Email: diegoval@iadb.org
}
\\
\small Washington, DC, United States
}

\date{\small \today}

\begin{document}

\maketitle
\thispagestyle{empty}

\begin{abstract}

This paper develops a nonlinear theoretical framework to analyze the dynamics of public expenditure reallocation in Uruguay. Motivated by recent debates on fiscal reform and expenditure efficiency, the paper models fiscal adjustment as a dynamic process in which expenditure categories exhibit heterogeneous institutional rigidity and convex adjustment costs.

Using the national budget for the 2026--2030 fiscal period as an institutional reference, the paper presents a calibrated illustration of the theoretical framework that captures key features of the structure of public spending, including transfers, the public wage bill, operating expenditures, and public investment. The calibration serves to translate institutional characteristics of the budget into quantitative transition dynamics rather than to estimate structural parameters econometrically.

The framework allows us to evaluate the short-, medium-, and long-run fiscal implications of alternative reform strategies, including administrative restructuring, pension reform, and the gradual reallocation of resources toward human capital and productivity-enhancing investment.

In contrast to descriptive expenditure reviews based on static budget comparisons, the model explicitly incorporates nonlinear transition dynamics and institutional frictions. The simulations show that structural expenditure reforms generate significant transitional fiscal costs arising from overlapping institutional systems, labor adjustment frictions, and pension transition liabilities.

As a result, fiscal reform produces a J-shaped expenditure trajectory in which total spending initially increases before gradually converging toward a more efficient long-run allocation. These findings highlight the importance of accounting for adjustment costs and transition dynamics when evaluating the feasibility and timing of structural fiscal reforms.

\end{abstract}

\noindent\textbf{Keywords:} Fiscal reform; Public expenditure; Adjustment costs; Institutional rigidity; Fiscal dynamics; Uruguay.

\noindent\textbf{JEL Classification:} H50, H61, H62, D78, E62.

\onehalfspacing

\section{Introduction}

Understanding the dynamics of public expenditure remains one of the central challenges of modern fiscal policy. While a large theoretical and empirical literature has examined the optimal size of government, an increasing body of research suggests that the composition and institutional structure of expenditure play a more decisive role in determining fiscal sustainability and long-run economic performance than aggregate spending levels alone \citep{Barro1990,TanziSchuknecht2000,AlesinaPerotti1997,KnellerBleaneyGemmel1999}. In this perspective, the central policy question is not merely how much governments spend, but rather how public resources are allocated across competing functions of the state and how institutional constraints shape the evolution of those allocations over time \citep{Barro1991,AcemogluRobinson2012,Rodrik2008}.

A substantial literature has shown that the growth effects of fiscal policy depend critically on the composition of public spending. Productive expenditures such as infrastructure, human capital formation, and innovation policy tend to generate higher long-run growth returns than purely redistributive transfers or administrative spending \citep{Barro1990,KnellerBleaneyGemmel1999,DevarajanSwaroopZou1996}. Consequently, the structure of public expenditure has become a central focus in both academic research and policy debates concerning fiscal sustainability and development strategies \citep{TanziSchuknecht2000,OECD2015,WorldBank2018}.

In practice, however, public expenditure reforms rarely follow the frictionless adjustment paths assumed in standard fiscal models. A substantial share of government spending is institutionally rigid, embedded in legal commitments, labor contracts, and entitlement programs that cannot be easily modified within a short policy horizon \citep{Drazen2000,PerssonTabellini2000,AlesinaDrazen1991}. These rigidities imply that attempts to restructure public expenditure typically involve significant transitional costs before any efficiency gains can be realized. Such costs arise from several sources, including the coexistence of legacy institutional arrangements with newly introduced policy frameworks, adjustment frictions in public sector labor markets, and the political economy constraints that shape the timing and sequencing of fiscal reforms \citep{SaintPaul1997,Acemoglu2006}.

The importance of these transition dynamics has been widely documented in the literature on fiscal consolidations and structural reforms. Empirical studies have shown that expenditure-based adjustments frequently exhibit nonlinear trajectories, reflecting the interaction between institutional constraints and macroeconomic policy responses \citep{BlanchardPerotti2002,AlesinaArdagna2010,AlesinaFaveroGiavazzi2019}. In many cases, fiscal reforms initially generate higher expenditure levels due to restructuring costs, only producing measurable efficiency gains over a longer horizon. These dynamics have been interpreted as a consequence of the political economy of fiscal adjustment, where governments must navigate complex institutional environments while attempting to reallocate public resources toward more productive uses \citep{Drazen2000,PerssonTabellini2000,AlesinaDrazen1991}.

Despite this growing recognition of institutional rigidities in public expenditure systems, most empirical expenditure reviews remain primarily descriptive in nature. Policy-oriented analyses typically compare budget categories across countries or identify potential areas of efficiency improvement, but they seldom provide a formal economic framework capable of modeling the dynamic costs associated with structural expenditure reforms \citep{TanziSchuknecht2000,OECD2015}. As a result, the transitional fiscal implications of expenditure restructuring are often underestimated in policy discussions, leading to unrealistic expectations regarding the speed and magnitude of potential fiscal adjustments.

Uruguay provides a particularly informative case study in this context. Although the overall size of public spending in Uruguay is moderate when compared with advanced economies, the composition of expenditure reveals a high degree of institutional rigidity. A large fraction of the budget is committed to social transfers, public sector wages, and legally mandated programs, leaving relatively limited scope for discretionary fiscal adjustments. These characteristics reflect the historical evolution of Uruguay’s welfare state and the institutional architecture of its public sector, both of which have contributed to the persistence of rigid expenditure structures \citep{AcemogluRobinson2012,North1990}.

Recent policy discussions have emphasized the need to improve the efficiency of public spending and reduce institutional fragmentation within the Uruguayan state. In particular, the expenditure review conducted by CERES documents the proliferation of public agencies and the potential duplication of administrative functions across government institutions \citep{CERES2024}. While this type of analysis provides valuable descriptive evidence on the institutional architecture of public spending, it typically stops short of offering a formal framework capable of evaluating the dynamic fiscal costs associated with expenditure restructuring. In other words, the institutional diagnosis is relatively clear, but the economic modeling of the transition process remains largely unexplored.\footnote{The CERES report introduces an interesting methodological contribution by applying natural language processing techniques to map institutional fragmentation. The study compiles 1,114 formal mandates of public agencies, converts them into semantic embeddings, and applies a standard NLP pipeline combining dimensionality reduction (UMAP) and clustering algorithms (HDBSCAN) to identify groups of similar governmental functions. This procedure produces 43 clusters of semantically related mandates across the public sector. While this approach provides a novel descriptive perspective on the functional structure of the state, it also presents several limitations from the standpoint of fiscal policy analysis. First, semantic similarity between institutional mandates does not necessarily imply duplication of expenditure or operational inefficiency, as agencies may perform complementary roles (for instance regulation, coordination, or program execution). Second, the method captures potential fragmentation but does not measure the degree of coordination that may exist across institutions within policy networks, which are common in modern public administration. Third, and most importantly for fiscal analysis, the approach does not quantify the fiscal savings that could realistically arise from institutional consolidation. For these reasons, the CERES analysis is interpreted in this paper primarily as an informative institutional mapping of potential administrative overlap rather than as a structural evaluation of the fiscal feasibility of expenditure reforms.}

This paper seeks to address this gap by developing a formal economic framework for analyzing the dynamics of public expenditure transitions. Specifically, we model public spending as a nonlinear dynamic system in which different categories of expenditure exhibit heterogeneous adjustment costs and institutional rigidities. In this framework, fiscal reforms are interpreted as structural reallocations within a system characterized by path dependence and convex adjustment costs, consistent with the broader literature on institutional persistence and policy transitions \citep{North1990,AcemogluRobinson2012}.

The empirical reference for the analysis is the Uruguayan national budget for the period 2026–2030, which provides a detailed representation of the current allocation of public resources across institutional sectors and expenditure categories. By combining this institutional information with a nonlinear dynamic model of fiscal adjustment, the paper develops a methodology for evaluating the fiscal consequences of expenditure restructuring across different temporal horizons. In particular, the framework distinguishes between short-run adjustment costs arising from institutional and labor market frictions, medium-term transition dynamics associated with the reorganization of public institutions, and long-run efficiency gains resulting from a more productive allocation of public resources.

The central contribution of the paper is therefore methodological and theoretical. Rather than treating expenditure restructuring as a static accounting exercise, the analysis develops a formal framework capable of capturing the nonlinear dynamics that characterize real-world fiscal reforms. Within this framework, transitional fiscal paths may exhibit J-shaped dynamics, where expenditure initially rises due to adjustment costs before gradually converging toward a more efficient steady-state allocation.

More broadly, the analysis contributes to the literature on fiscal adjustment and public sector reform by providing a tractable theoretical framework for studying expenditure transitions in institutional environments characterized by high rigidity. In doing so, the paper seeks to bridge the gap between descriptive expenditure reviews and formal economic modeling. While the quantitative calibration presented later in the paper is based on institutional characteristics of the Uruguayan budget, its purpose is illustrative: it serves to quantify the magnitude of the transition mechanisms implied by the model rather than to estimate structural parameters econometrically.

\section{Related Literature}

This paper is situated at the intersection of research on the composition of public spending, the dynamics and identification of fiscal adjustment, and the institutional frictions that make expenditure reform inherently path dependent. A first anchor is the long-standing literature linking fiscal policy to growth through the composition of expenditure rather than its aggregate level. In endogenous-growth environments, public spending can be productive when it relaxes bottlenecks in infrastructure and human capital formation, thereby affecting the economy’s long-run growth rate \citep{Barro1990,Barro1991}. Subsequent empirical work formalized this intuition by showing that reallocations across expenditure categories can dominate the effect of aggregate fiscal expansions, and that “productive” components are more tightly associated with growth than administrative spending or poorly targeted transfers \citep{DevarajanSwaroopZou1996,KnellerBleaneyGemmel1999}. This compositional view is central for any OECD-style welfare-state comparison, because high-welfare equilibria are not defined merely by high spending, but by a spending mix that credibly funds health, education, and social insurance while preserving incentives and fiscal sustainability \citep{TanziSchuknecht2000}.

A second strand concerns the macroeconomic effects of fiscal policy and, in particular, the identification of spending shocks and consolidations. The modern empirical literature has emphasized that the output effects of fiscal changes depend on country characteristics and policy regimes, including openness, exchange-rate arrangements, and debt levels \citep{IlzetzkiMendozaVegh2013}. Identification is especially delicate for consolidation episodes, where policy changes may be endogenous to growth expectations and financial conditions. Foundational work using narrative and institutional information provided a template for distinguishing exogenous policy shifts from endogenous responses \citep{BlanchardPerotti2002}. Building on this perspective, a large literature has studied consolidation episodes using alternative identification strategies, including narrative datasets and state-contingent designs, yielding strong evidence that austerity is typically contractionary on impact and may be highly state dependent \citep{GuajardoLeighPescatori2014,JordaSchularickTaylor2016}. At the same time, the debate has sharpened the distinction between expenditure-based and tax-based adjustment packages, with influential syntheses arguing that composition matters not only for long-run growth, but also for short-run stabilization costs \citep{AlesinaFaveroGiavazzi2019,AlesinaArdagna2010}. This paper is not primarily about multipliers, but the fiscal-consolidation literature provides the empirical motivation for modeling transition paths explicitly rather than treating reforms as instantaneous reallocations.

A third body of work studies fiscal rigidities and the political economy of delayed adjustment. In many systems, spending is not a continuous control variable: it is partially predetermined by entitlements, wage-setting institutions, legal commitments, and organizational inertia, which together create kinks and thresholds in feasible adjustment paths. Political economy models explain why stabilizations and reforms can be delayed even when their long-run desirability is widely recognized, because distributional conflict and bargaining frictions make it difficult to coordinate on the incidence of adjustment \citep{AlesinaDrazen1991,Drazen2000}. Related work emphasizes that political institutions and policy-making constraints shape fiscal outcomes through budgetary procedures and intertemporal commitments \citep{PerssonTabellini2000}. These mechanisms are directly relevant for Uruguay: when a large share of expenditure is tied to wages, transfers, and legally mandated programs, adjustment becomes an institutional problem as much as an accounting exercise, and reform feasibility depends on the timing and sequencing of political bargains as well as on macro-fiscal arithmetic.

A fourth literature addresses budget institutions, fiscal governance, and the design of processes that can credibly reallocate spending. A central result is that fiscal performance is systematically related to budget rules, procedures, and the structure of agenda-setting power, with institutions affecting both deficit bias and the composition of expenditure \citep{AlesinaPerotti1997,AlesinaPerotti1996,PoterbaVonHagen1999}. This institutional lens is important for distinguishing between two conceptually different objects that are often conflated in policy discussions: (i) technical inefficiencies and duplication within the public sector, and (ii) political-institutional constraints that prevent governments from converting technical diagnoses into binding reallocations. The former invites managerial reforms and reorganization; the latter requires changes in budgetary processes, commitment devices, and governance arrangements.

A fifth strand focuses on public-sector efficiency, administrative fragmentation, and the state’s organizational architecture. The empirical and comparative literature on government performance emphasizes that public spending outcomes are mediated by institutional quality, organizational design, and state capacity \citep{North1990}. In this view, the allocative question (how much is spent on education, health, or pensions) cannot be separated from the organizational question (through which agencies, contracts, and governance structures resources are deployed). This is precisely where descriptive diagnoses of fragmentation become useful as an empirical starting point, yet insufficient as an economic explanation. While institutional fragmentation can generate duplication and coordination failures, the fiscal implications of consolidation are intrinsically dynamic: reorganizations may entail temporary overlaps, labor adjustment costs, and transition liabilities before efficiency gains materialize.

Finally, the paper speaks to the policy-oriented literature on expenditure reviews and spending reviews, which has grown substantially over the past two decades. International guidance emphasizes that spending reviews can improve reallocative capacity and create fiscal space, but also notes that effectiveness depends on integration with budget processes, political ownership, and credible implementation mechanisms \citep{OECD2015,OECDSpendingReviews2014,IMFSpendingReviews2022}. Complementary operational literatures such as the World Bank’s Public Expenditure Review (PER) tradition provide diagnostic frameworks for mapping spending, performance, and institutional bottlenecks \citep{WorldBank2018}. However, these approaches typically stop short of formalizing the nonlinear transition costs that arise when governments attempt to move from one expenditure composition to another under binding institutional rigidities. The CERES review of Uruguay, in particular, offers an important institutional diagnosis and motivates the search for efficiency-enhancing reallocations, yet it does not provide a structural dynamic model of reform paths. The contribution of this paper is to bridge that gap by providing an explicit nonlinear framework in which expenditure categories adjust with heterogeneous convex costs, reform feasibility is state dependent, and the fiscal trajectory can be characterized over short, medium, and long horizons under empirically grounded constraints from the 2026--2030 budget.

\section{Institutional Structure of Public Expenditure }
\label{sec:institutions}

This section formalizes the institutional architecture of Uruguayan public expenditure in a way that is suitable for structural modeling and calibration with the 2026--2030 budget. The central object is the government’s expenditure allocation at date $t$, denoted by $G_t$, where $t\in\mathbb{N}$ indexes fiscal years. Throughout, we interpret the budget as a legally constrained allocation rule: for each fiscal year, the government selects expenditures subject to statutory commitments, administrative procedures, and enforcement constraints. The purpose of the section is twofold. First, it introduces a parsimonious but institutionally meaningful decomposition of expenditure. Second, it provides an explicit mapping between this decomposition and the sources of rigidity that govern feasible adjustments over short and medium horizons.

\subsection{A canonical decomposition}

We begin with an accounting identity that partitions total public expenditure into four components,
\begin{align}\label{eq:gt_decomp}
G_t = T_t + W_t + I_t + F_t,
\end{align}
where $T_t$ denotes transfers and social security spending, $W_t$ denotes compensation of public employees (the wage bill), $I_t$ denotes public investment (capital formation), and $F_t$ denotes other operating expenditures (intermediate consumption and administrative outlays). The decomposition in \eqref{eq:gt_decomp} is not intended to be exhaustive in institutional detail; rather, it isolates four categories that differ sharply in their contractual and legal structure, and thus in their adjustment properties.

The crucial point is that \eqref{eq:gt_decomp} is an identity, not a behavioral restriction. The behavioral content enters through constraints on the admissible paths $\{T_t,W_t,I_t,F_t\}_{t\ge 0}$, which reflect the institutional and legal environment. To prepare the subsequent dynamic model, we therefore treat each component in \eqref{eq:gt_decomp} as the outcome of a choice problem subject to category-specific rigidity.

\subsection{Rigidity as feasibility constraints}

The concept of ``rigidity'' is operationalized as a restriction on the set of feasible adjustments from one period to the next. Let $\Delta Z_t \equiv Z_t - Z_{t-1}$ denote the one-period change in any expenditure category $Z_t\in\{T_t,W_t,I_t,F_t\}$. We posit that the budgetary and institutional environment implies constraints of the form
\begin{align}\label{eq:feasible_set}
\Delta Z_t \in \mathcal{D}_Z(s_t),
\qquad Z\in\{T,W,I,F\},
\end{align}
where $s_t$ denotes a vector of institutional ``state'' variables (e.g., legal commitments, outstanding contracts, demographic structure, negotiated wage agreements, and pre-committed investment projects) and $\mathcal{D}_Z(\cdot)$ is a correspondence describing feasible changes. Equation \eqref{eq:feasible_set} captures the idea that short-run fiscal discretion is limited: even if a reduction in $Z_t$ is desirable from the perspective of the government’s objective, it may be infeasible because it violates statutory or contractual obligations.

To interpret \eqref{eq:feasible_set}, it is useful to distinguish between three layers of rigidity.

\paragraph{(i) Statutory rigidity.} Transfers and social security spending, $T_t$, are partly determined by legal entitlements and eligibility rules. These generate quasi-automatic spending that depends on demographic and macroeconomic conditions. Statutory rigidity implies that $\mathcal{D}_T(s_t)$ can be narrow over short horizons, especially on the downside: downward adjustments may require legislative changes, while upward drift may occur mechanically in response to demographics or indexation.

\paragraph{(ii) Contractual rigidity.} The wage bill $W_t$ is shaped by employment protection, wage-setting, and collective bargaining. Even when headcount can be adjusted, separations are associated with fiscal costs (severance, unemployment insurance, litigation risk) and political costs. Contractual rigidity therefore restricts both the magnitude and speed of downward adjustments in $W_t$, and implies asymmetry: reductions are more costly and less feasible than increases.

\paragraph{(iii) Project rigidity.} Public investment $I_t$ is often governed by multi-year projects, procurement rules, and committed co-financing. As a result, investment can be flexible at the margin (e.g., delaying projects) yet not costless to reschedule. Project rigidity typically generates nonconvexities: canceling a project can be far more costly than marginally changing its scale.

The residual operating component $F_t$ is usually thought to be the ``most flexible'' category. However, even $F_t$ is bounded by administrative needs, service delivery constraints, and procurement contracts. Hence $\mathcal{D}_F(s_t)$ is not unrestricted and may become especially tight when $F_t$ is already low relative to required service levels.

\subsection{Institutional baseline and effective discretion}

Budget practice implies that not all of $G_t$ is simultaneously negotiable at date $t$. Let $\bar{x}_t \equiv (\bar{T}_t,\bar{W}_t,\bar{I}_t,\bar{F}_t)$ denote the legally and administratively implied baseline (``linea de base'') given the inherited state $s_t$. This baseline is the expenditure vector that would occur absent active reallocation decisions, reflecting entitlements, negotiated wage drift, continuing projects, and ongoing operating needs. Actual spending choices $x_t \equiv (T_t,W_t,I_t,F_t)$ can therefore be represented as
\begin{align}\label{eq:baseline_gap}
x_t = \bar{x}_t(s_t) + u_t,
\end{align}
where $u_t$ is a vector of discretionary deviations from baseline. Equation \eqref{eq:baseline_gap} is not a tautology: its content is the recognition that the baseline is itself an endogenous object determined by institutional commitments. In particular, $\bar{x}_t$ can drift over time even without active policy changes, and this drift is often the central driver of medium-run fiscal pressure.

The effective discretion at date $t$ is thus the set of admissible $u_t$ such that $x_t$ satisfies \eqref{eq:feasible_set}. This perspective will be central for the nonlinear adjustment model below: even if the government seeks a large reallocation in $x_t$, the combination of baseline drift and feasibility constraints can force gradualism and generate transitional fiscal costs.

\section{A Nonlinear Model of Expenditure Adjustment}\label{sec:model}

We now formalize expenditure reform as an intertemporal choice problem under institutional rigidity. The model is deliberately minimalist: its purpose is to isolate the mechanism through which rigidity and adjustment costs jointly generate nonlinear transition dynamics. The substantive institutional structure enters through (i) a baseline process $\bar{x}_t(s_t)$ and (ii) category-specific adjustment costs and feasibility constraints.

\subsection{Control variables, state variables, and law of motion}

Let the expenditure vector be
\begin{align}\label{eq:x_def}
x_t = (T_t, W_t, I_t, F_t) \in \mathbb{R}^4_+.
\end{align}
Let $s_t$ denote an institutional state summarizing inherited commitments (entitlements, wage agreements, project pipeline, and administrative obligations). We allow $s_t$ to evolve according to
\begin{align}\label{eq:s_law}
s_{t+1} = \mathcal{S}(s_t, x_t, \varepsilon_{t+1}),
\end{align}
where $\varepsilon_{t+1}$ captures exogenous shocks (e.g., demographic, macroeconomic, and financial shocks). The mapping $\mathcal{S}$ is left general at this stage because its specification depends on the empirical implementation; conceptually, it encodes the idea that today’s expenditure decisions shape tomorrow’s commitments (e.g., employment levels affect future wage-bill baselines; investments affect future operating needs; transfer rules interact with demographics).

The baseline $\bar{x}_t$ is a function of $s_t$:
\begin{align}\label{eq:baseline}
\bar{x}_t = \bar{x}(s_t).
\end{align}
Thus, even in the absence of discretionary action, spending drifts due to the evolution of the institutional state.

\subsection{Objective function}

The government chooses a sequence of expenditure allocations 
$\{x_t\}_{t\ge 0}$ in order to minimize the discounted fiscal and social 
costs associated with the transition of the expenditure structure over time. 
Formally, the planner solves

\begin{align}\label{eq:planner_problem}
\min_{\{x_t\}_{t\ge 0}} \ \mathbb{E}_0 \sum_{t=0}^{\infty} \beta^t
\Big( C(x_t,s_t) + \Phi(\Delta x_t; s_t) \Big),
\end{align}

subject to the law of motion for institutional commitments 
\eqref{eq:s_law} and feasibility constraints \eqref{eq:feasible_set}. 
The parameter $\beta\in(0,1)$ denotes the intertemporal discount factor.

The function $C(x_t,s_t)$ represents the contemporaneous fiscal burden 
and welfare loss associated with the level and composition of public 
expenditure. The specification of $C(\cdot)$ depends on the empirical 
or normative objective of the analysis. For example, the planner may 
penalize excessive total spending, deviations from a target expenditure 
composition that supports long-run growth, or deviations from service 
delivery benchmarks. Importantly, $C$ depends on the vector $x_t$ rather 
than on its aggregate level alone, thereby reflecting the compositional 
perspective emphasized in the introduction.

The function $\Phi(\Delta x_t; s_t)$ captures the transitional costs 
associated with changing expenditure allocations. These costs arise 
from the institutional frictions described in Section 
\ref{sec:institutions}, including severance and unemployment costs 
associated with reductions in the wage bill, legal and administrative 
costs involved in modifying transfer programs, and rescheduling or 
cancellation costs affecting multi-year investment projects. The 
dependence on the state variable $s_t$ allows for state-dependent 
adjustment costs: fiscal restructuring may become more expensive when 
existing commitments are larger or when the administrative system is 
already operating close to capacity.

For the theoretical analysis that follows, we assume that the fiscal 
cost function $C(x,s)$ is continuous and convex in $x$, and that the 
adjustment cost function $\Phi(\Delta x,s)$ is strictly convex in 
$\Delta x$ and satisfies $\Phi(0,s)=0$ for all admissible states $s$.

\begin{proposition}[Existence of an optimal fiscal transition]
Under the assumptions above and for $\beta\in(0,1)$, the planner's 
problem admits an optimal sequence of expenditure allocations 
$\{x_t\}_{t\ge0}$.
\end{proposition}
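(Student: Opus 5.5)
We prove existence with the direct method of the calculus of variations on sequence space. The stated assumptions do not quite suffice as written, so we make explicit three mild regularity conditions that the proof needs and that are implicit in the economic setting. \textbf{(a)} $C$ is bounded below; without loss of generality $C\ge 0$, since $\Phi\ge 0$ already follows from strict convexity and $\Phi(0,s)=0$ only if $0$ is a minimizer. \textbf{(b)} The feasible sets $\mathcal{D}_Z(s)$ are closed, and $\mathcal{S}$ and $\bar{x}$ are continuous. \textbf{(c)} There is at least one feasible path with finite discounted cost, for instance the baseline path $x_t=\bar{x}(s_t)$ when it satisfies \eqref{eq:feasible_set} and has bounded one-period cost. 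We treat the deterministic case, or fix a realization of $\{\varepsilon_t\}$; the stochastic case then follows by applying the same argument to policies measurable with respect to histories, using standard measurable-selection arguments.

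\textbf{Step 1 (compactness).} Fix $x_{-1}$ and $s_0$. Equation \eqref{eq:x_def} gives $x_t\in\mathbb{R}^4_+$. We additionally need the feasible sets to be bounded, either because $\mathcal{D}_Z(s)$ is bounded for admissible $s$, or because $C$ is coercive. Then, by induction, each coordinate $x_t$ of a feasible path lies in a compact set $K_t$ determined by $x_{-1}$ and $s_0$. The feasible set $\mathcal{X}$ is therefore a subset of $\prod_t K_t$, which is compact in the product topology by Tychonoff's theorem. $\mathcal{X}$ is also closed in that topology: feasibility is a countable family of constraints, each involving only finitely many coordinates, and each is closed by (b), because $s_t$ is a continuous function of $(x_0,\dots,x_{t-1})$. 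Hence $\mathcal{X}$ is compact.

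\textbf{Step 2 (lower semicontinuity).} Define $V(\mathbf{x})=\sum_t\beta^t\big(C(x_t,s_t)+\Phi(\Delta x_t;s_t)\big)$. Each summand is continuous in finitely many coordinates, since $C$ is continuous and $\Phi$ is convex in $\Delta x$ (hence continuous) and, by assumption, continuous in $s$. All summands are nonnegative, so $V$ is a supremum of continuous partial sums and is therefore lower semicontinuous on $\mathcal{X}$. By (c), $\inf V<\infty$. A lower semicontinuous function on a nonempty compact set attains its minimum (Weierstrass), which gives the optimal sequence $\{x_t\}$.

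\textbf{Main obstacle.} The difficulty is compactness in infinite horizon, namely Step 1. If neither the feasible sets nor the costs give bounds, a minimizing sequence could escape to infinity. The fallback is to assume $C$ is coercive in $x$ uniformly in $s$. Then, for a minimizing sequence, $V$ bounds each $\beta^tC(x_t,s_t)$, which confines every $x_t$ to a sublevel set. A diagonal extraction gives a coordinatewise convergent subsequence, and Fatou's lemma applied termwise gives $V(\lim)\le\liminf V$. Strict convexity of $\Phi$ is not needed for existence. It would enter a uniqueness claim only together with affine dynamics for $\mathcal{S}$.
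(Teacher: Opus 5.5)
Your route differs from the paper's. Appendix A is a short sketch for a stripped-down problem with $C(x_t)$, $\Phi(x_t-x_{t-1})$ and no institutional state. It asserts that the objective is convex in the control sequence and bounded below, that the feasible set is closed, and that discounting makes the sum converge, and then cites ``standard results from dynamic convex optimization''. Your direct method in the product topology (Tychonoff plus lower semicontinuity of the partial sums, or diagonal extraction plus Fatou) never uses convexity. It therefore also covers a nonlinear $\mathcal{S}$, where the objective need not be convex in $\{x_t\}$. It also correctly identifies compactness as the real issue, which the paper's sketch skips: closedness, convexity and a lower bound do not by themselves give attainment. For example, with $\Phi\equiv 0$, $C(x)=e^{-\sum_k x_k}$ on $\mathbb{R}^4_+$ has no minimizer.

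The gap is where you get compactness from. Bounded $\mathcal{D}_Z$ and coercive $C$ are not among the hypotheses, and both fail in the paper's own setting. There $x_t$ ranges over an unbounded set, and $C\equiv 0$ or $C(x)=e^{-\sum_k x_k}$ are continuous, convex and bounded below but not coercive. So as written you prove the proposition only for a strictly smaller class of problems.

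The missing observation is that the adjustment cost itself is coercive. After normalizing $C\ge 0$, and with $\Phi\ge 0$, any path with $V(\mathbf{x})\le M$ satisfies $\beta^t\Phi(\Delta x_t;s_t)\le M$ for every $t$. For the quadratic--cubic specification with constant $\gamma_k>0$ (as in Appendix A and the calibration), this gives $|\Delta x_{k,t}|\le(2M\beta^{-t}/\gamma_k)^{1/2}$. Since $x_{-1}$ is fixed, each $x_t$ then lies in a bounded set depending only on $(t,M,x_{-1})$, uniformly along a minimizing sequence. Your diagonal-extraction and Fatou step then finishes the proof, with no condition on $C$ beyond a lower bound and no boundedness of $\mathcal{D}_Z$.

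The same works for a general $\Phi$. The condition you already use in (a), that $0$ minimizes $\Phi(\cdot;s)$, together with strict convexity gives $\Phi(u;s)>0$ for $u\neq 0$. Joint continuity then yields $m_K\equiv\min\{\Phi(u;s): s\in K,\ |u|=1\}>0$ on any compact set $K$ of states. Convexity with $\Phi(0;s)=0$ gives $\Phi(u;s)\ge m_K|u|$ for $|u|\ge 1$, and continuity of $\mathcal{S}$ keeps $s_t$ in compact sets by induction.

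So your closing remark misplaces the role of $\Phi$. Strict convexity with its minimum at zero is exactly what supplies compactness, and it is also what the paper's sketch would need.

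Two smaller points. In (b), $\mathcal{D}_Z$ must have a closed graph, not just closed values; otherwise the feasible set need not be closed once $s_t$ moves with the path. And $C$ must be jointly continuous in $(x,s)$.
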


\begin{proof}
See Appendix A.
\end{proof}

\subsection{Nonlinear adjustment costs}

A central modeling choice is the functional form of $\Phi$. The simplest way to introduce nonlinear transition costs is to allow adjustment costs to be convex and asymmetric. The baseline specification in your draft,
\begin{align}\label{eq:phi_basic}
\Phi(\Delta x_t)=\gamma (\Delta x_t)^2 + \eta |\Delta x_t|^3,
\end{align}
should be interpreted componentwise. To make this explicit, write
\begin{align}\label{eq:phi_component}
\Phi(\Delta x_t; s_t)
=\sum_{k\in\{T,W,I,F\}}
\left[
\frac{\gamma_k(s_t)}{2}\big(\Delta x_{k,t}\big)^2
+\frac{\eta_k(s_t)}{3}\big|\Delta x_{k,t}\big|^3
\right],
\end{align}
where $\Delta x_{k,t}$ denotes the change in category $k$ between $t-1$ and $t$, and $\gamma_k(\cdot),\eta_k(\cdot)\ge 0$ are category-specific curvature parameters. The quadratic term captures standard convex costs (small changes are cheap, large changes increasingly expensive), while the cubic absolute term introduces higher-order convexity that amplifies the marginal cost of rapid reallocation. This specification is useful because it generates gradualism endogenously: even if the government prefers an immediate jump to a new composition, the marginal adjustment cost becomes steep as $|\Delta x_{k,t}|$ grows.

In many fiscal applications, asymmetry matters: cutting wages or transfers may be more costly than expanding them because of political resistance, legal protections, or one-sided contractual obligations. A convenient Econometrica-style way to incorporate this is
\begin{align}\label{eq:phi_asym}
\Phi(\Delta x_t; s_t)
=\sum_{k}
\left[
\frac{\gamma_k^+(s_t)}{2}\big(\Delta x_{k,t}\big)_+^2
+\frac{\gamma_k^-(s_t)}{2}\big(\Delta x_{k,t}\big)_-^2
+\frac{\eta_k(s_t)}{3}\big|\Delta x_{k,t}\big|^3
\right],
\end{align}
where $(z)_+\equiv\max\{z,0\}$ and $(z)_-\equiv\max\{-z,0\}$. Setting $\gamma_k^- > \gamma_k^+$ captures the idea that reductions are more expensive than increases. This asymmetry is empirically plausible for $W_t$ and often for $T_t$ in entitlement-driven systems. It is also central for the policy interpretation: the model can generate scenarios in which expenditure reduction is technically ``desirable'' but fiscally costly in the short run due to transition expenses, precisely the mechanism motivating the paper.

\subsection{First-order conditions and economic interpretation}

To characterize the optimal adjustment path, consider the government’s dynamic optimization problem introduced above. The planner chooses the sequence $\{x_t\}_{t\ge0}$ to minimize

\begin{align}
\min_{\{x_t\}_{t\ge0}}
\mathbb{E}_0
\sum_{t=0}^{\infty}
\beta^t
\left[
C(x_t,s_t) + \Phi(\Delta x_t;s_t)
\right],
\end{align}

subject to the law of motion for the institutional state

\begin{align}
s_{t+1} = \mathcal{S}(s_t,x_t,\varepsilon_{t+1}),
\end{align}

where $\Delta x_t \equiv x_t - x_{t-1}$. The function $C(x_t,s_t)$ captures the fiscal and economic cost associated with the level and composition of expenditures, while $\Phi(\Delta x_t;s_t)$ represents the adjustment costs generated by institutional rigidities embedded in the budget process.

To highlight the core mechanism, consider the interior case and abstract from inequality constraints. Let $\lambda_t$ denote the vector of multipliers associated with the state transition. The first-order condition for expenditure component $k \in \{T,W,I,F\}$ can be written as

\begin{align}\label{eq:euler_expanded}
\partial_{x_k} C(x_t,s_t)
+
\partial_{\Delta x_{k,t}} \Phi(\Delta x_t;s_t)
-
\beta
\mathbb{E}_t
\left[
\partial_{\Delta x_{k,t+1}} \Phi(\Delta x_{t+1};s_{t+1})
\right]
+
\lambda_t \, \partial_{x_k}\mathcal{S}(s_t,x_t,\varepsilon_{t+1})
= 0 .
\end{align}

Equation \eqref{eq:euler_expanded} has a natural economic interpretation. The first term represents the marginal fiscal or welfare cost of increasing expenditure in category $k$ at time $t$. The second term captures the contemporaneous marginal adjustment cost generated by changing expenditure relative to the previous period. The third term reflects the intertemporal effect of today's decision on tomorrow's adjustment costs. Because $\Delta x_{t+1} = x_{t+1}-x_t$, a change in $x_t$ mechanically affects the magnitude of adjustment required in the next period. Finally, the last term captures the indirect effect of current expenditure decisions on the evolution of the institutional state.

Under the quadratic--cubic specification of adjustment costs introduced earlier,

\begin{align}
\Phi(\Delta x_t;s_t)
=
\sum_{k}
\left[
\frac{\gamma_k(s_t)}{2}(\Delta x_{k,t})^2
+
\frac{\eta_k(s_t)}{3}|\Delta x_{k,t}|^3
\right],
\end{align}

the marginal adjustment cost associated with component $k$ is

\begin{align}\label{eq:phi_deriv}
\partial_{\Delta x_{k,t}}\Phi(\Delta x_t; s_t)
=
\gamma_k(s_t)\Delta x_{k,t}
+
\eta_k(s_t)\Delta x_{k,t}|\Delta x_{k,t}|.
\end{align}

This expression highlights the nonlinear nature of institutional adjustment costs. For small changes in expenditure, marginal costs are approximately linear in $\Delta x_{k,t}$ and the government can reallocate resources relatively easily. However, as the magnitude of the adjustment increases, the cubic component becomes increasingly important and marginal costs rise sharply. Consequently, large and abrupt changes in expenditure—particularly in rigid categories such as the wage bill or entitlement programs—generate disproportionately large fiscal and administrative costs.

\begin{proposition}[Gradual expenditure adjustment under institutional rigidity]
Suppose that the fiscal cost function $C(x_t,s_t)$ is convex in $x_t$ and that the adjustment cost function $\Phi(\Delta x_t;s_t)$ is strictly convex in $\Delta x_t$, with $\gamma_k(s_t)>0$ and $\eta_k(s_t)\ge0$ for all expenditure categories $k$. Then the optimal solution to the planner’s problem implies gradual adjustment of expenditures over time. In particular, optimal reallocations of the expenditure vector $x_t$ are distributed across multiple periods rather than implemented as instantaneous discrete changes.
\end{proposition}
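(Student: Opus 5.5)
The plan is a perturbation argument. Take any path in which some reallocation is concentrated in one period, and show that spreading part of it over an adjacent period strictly lowers the planner's objective. That contradicts optimality.

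Fix an optimal sequence $\{x_t^*\}$, which exists by the existence proposition above. Suppose that at some date $\tau$ the whole reallocation toward a new level is done at once: $\Delta x^*_{k,\tau}\neq 0$ for some component $k$, while $\Delta x^*_{k,\tau+1}=0$ (or, symmetrically, $\Delta x^*_{k,\tau-1}=0$). This is what an instantaneous discrete change means here. For $\theta\in(0,1)$, build a competitor path that differs from $x^*$ only in the date-$\tau$ entry of component $k$:
\begin{align*}
x_{k,\tau}^\theta=x^*_{k,\tau}-\theta\,\Delta x^*_{k,\tau}.
\end{align*}
Then $\Delta x_{k,\tau}^\theta=(1-\theta)\Delta x^*_{k,\tau}$ and $\Delta x_{k,\tau+1}^\theta=\theta\,\Delta x^*_{k,\tau}$, so the same total change is now split across two periods.

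The key computation is on the adjustment-cost terms. With the componentwise form \eqref{eq:phi_component} and $\gamma_k>0$, define $\phi_k(d)=\tfrac{\gamma_k}{2}d^2+\tfrac{\eta_k}{3}|d|^3$. This function is strictly convex, $\phi_k(0)=0$, and $\phi_k'(0)=0$ by \eqref{eq:phi_deriv}. Hence, writing $d=\Delta x^*_{k,\tau}$,
\begin{align*}
\phi_k((1-\theta)d)+\beta\,\phi_k(\theta d)<\phi_k(d)\quad\text{for small }\theta>0 .
\end{align*}
Differentiating the left side in $\theta$ at $\theta=0$ gives $-\phi_k'(d)d+\beta\phi_k'(0)d=-\phi_k'(d)d<0$. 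This inequality is the engine of the proof: a zero marginal cost at zero adjustment makes a small delayed second step first-order free, while trimming the first step saves a first-order amount.

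Next I control the remaining terms. The perturbation changes $C(x_\tau,s_\tau)$ by a first-order amount $-\theta\,\partial_{x_k}C\,d$. It also changes future states through $\mathcal{S}$, which adds the $\lambda_\tau\partial_{x_k}\mathcal{S}$ term in \eqref{eq:euler_expanded}. I would fold these into the first-order condition \eqref{eq:euler_expanded}. At an optimum with $\Delta x^*_{k,\tau+1}=0$, that condition forces
\begin{align*}
\partial_{x_k}C+\lambda_\tau\partial_{x_k}\mathcal{S}=-\phi_k'(d).
\end{align*}
So the level incentive exactly balances the marginal adjustment cost at date $\tau$. But this marginal cost is strictly nonzero when $d\neq0$, which makes the level marginal cost strictly nonzero too. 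In that case the same first-order condition at $\tau+1$, where $\phi_k'(0)=0$, cannot also hold unless the level marginal cost vanishes or changes sign. Such a reversal would itself require $x_{k,\tau+1}\neq x_{k,\tau}$, contradicting $\Delta x^*_{k,\tau+1}=0$.

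The main obstacle is this last step when the state feedback through $\mathcal{S}$ and expectations over $\varepsilon$ are general. The conclusion really needs the level marginal cost $\partial_{x_k}C+\lambda\partial_{x_k}\mathcal{S}$ to be (at least locally) continuous and not to shift discretely between $\tau$ and $\tau+1$ unless $x$ moves. I would first try to prove the result in the deterministic case with $s_t$ exogenous, where the argument above is clean. I would then state the general case under an added continuity assumption on $C$ and $\mathcal{S}$ and on the multipliers. I would also note that binding feasibility constraints \eqref{eq:feasible_set} only strengthen gradualism, as long as the perturbed path stays feasible, which it does for $\theta$ small whenever $\mathcal{D}_k$ is an interval containing $0$.
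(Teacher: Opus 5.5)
\textbf{The gap is the contradiction at $\tau+1$.} Your computation of the adjustment-cost saving is correct. But at an optimum the total first-order effect of your perturbation is zero, since that is just the first-order condition at $\tau$. Everything therefore rests on the condition at $\tau+1$, and there you dropped the forward-looking term. With $\Delta x^*_{k,\tau+1}=0$, \eqref{eq:euler_expanded} at $\tau+1$ reads
\begin{align*}
\partial_{x_k}C(x_{\tau+1},s_{\tau+1})+\lambda_{\tau+1}\,\partial_{x_k}\mathcal{S}=\beta\,\phi_k'(\Delta x_{k,\tau+2}),
\end{align*}
which holds with a nonzero level marginal cost whenever $\Delta x_{k,\tau+2}\neq0$. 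Even if the level marginal cost were equal at $\tau$ and $\tau+1$, the two conditions only give $\beta\phi_k'(\Delta x_{k,\tau+2})=-\phi_k'(d)$. That is a reversal at $\tau+2$, which contradicts nothing without a separate monotonicity result. Nor need the level marginal cost be equal: $\partial_{x_k}C$ depends on the whole vector $x_t$ (the paper insists on a compositional $C$) and on $s_t$, and the other components keep moving. So the extra continuity assumption you propose would not repair the step.

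In fact the claim fails under your definition even in the cleanest case. Take two categories, quadratic $C$ with non-diagonal Hessian, and $\eta_k=0$. Each component's deviation from target is then $c_1\mu_1^{t}+c_2\mu_2^{t}$ with stable roots $\mu_i\in(0,1)$. With opposite-sign weights, the initial condition can be chosen so that one increment is exactly zero, preceded by a move and followed by a reversal. Such a pause is not an instantaneous reform, and your ``symmetric'' case is just the onset of any adjustment. What must be ruled out is arrival at a terminal vector in finite time: $x_t=\bar x$ for all $t\ge T$ with $\Delta x_T=d\neq0$.

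\textbf{You also need $C$ differentiable.} Treating $\partial_{x_k}C$ as a number presupposes differentiability, and it cannot be dropped: for merely convex $C$ the statement is false. Take one category with $x_{-1}=0$ and $C(x)=a|x-D|$, $D>0$, which is continuous, convex and bounded below. Let $\phi(d)=\tfrac{\gamma}{2}d^2+\tfrac{\eta}{3}|d|^3$ with $\gamma D+\eta D^2\le a$. Then the path $x_t=D$ for all $t\ge0$ is optimal. Using the subgradients $g_0=-\phi'(D)\in[-a,a]$ and $g_t=0$ for $t\ge1$, the supporting-hyperplane inequality gives $\sum_t\beta^t[C(y_t)+\phi(\Delta y_t)]\ge\phi(D)$ for every competitor $y$. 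The paper's own proof only notes that strict convexity makes splitting a fixed total change cheaper, so it overlooks this level cost of delay as well.

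\textbf{How to repair it.} Assume $C$ differentiable and time-invariant, with $s_t$ exogenous and constant from $T$ on, so there is no feedback through $\mathcal{S}$. Then your perturbation closes the argument once aimed at the terminal-vector formulation. Perturbing $x_{T+1}$ alone gives $\nabla C(\bar x)=0$. Replacing $x_T$ by $\bar x-\theta d$ then changes the objective by $-\theta\beta^T\nabla\Phi(d)\cdot d+o(\theta)$, which is negative for small $\theta>0$ because $\nabla\Phi(d)\cdot d=\sum_k(\gamma_k d_k^2+\eta_k|d_k|^3)>0$. Hence a moving optimal path never comes to rest in finite time, which is the precise content of ``gradual''. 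Some such restriction on $s_t$ is unavoidable, since a time-varying $C(\cdot,s_t)$ can rationalize a one-step jump.
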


\begin{proof}
From the Euler condition \eqref{eq:euler_expanded}, a change in $x_{k,t}$ affects both the contemporaneous adjustment cost through $\partial_{\Delta x_{k,t}}\Phi(\Delta x_t;s_t)$ and the expected future adjustment cost through $\mathbb{E}_t[\partial_{\Delta x_{k,t+1}}\Phi(\Delta x_{t+1};s_{t+1})]$. Because $\Phi$ is strictly convex in $\Delta x_t$, concentrating the entire adjustment in a single period generates larger marginal costs than distributing the same total change over several periods. The planner can therefore reduce total adjustment costs by smoothing expenditure changes across time. The optimal solution consequently implies gradual expenditure transitions.
\end{proof}

From an economic perspective, the Euler condition therefore captures the trade-off between immediate fiscal gains and the institutional cost of implementing reform. While reducing rigid expenditures may improve long-run fiscal sustainability, the nonlinear adjustment costs imply that implementing such reductions too rapidly can be inefficient. Optimal policy therefore smooths expenditure adjustments over time, producing gradual fiscal transitions that reflect the institutional constraints embedded in the budget process.

\subsection{J-shaped fiscal trajectories as an equilibrium implication}

An important implication of the framework concerns the transitional dynamics generated by expenditure reform. When expenditure categories are subject to convex and asymmetric adjustment costs, fiscal consolidation or expenditure restructuring may generate \emph{J-shaped fiscal trajectories}: fiscal pressures initially increase before declining as the system converges to a new expenditure composition.

To formalize this mechanism, consider again the government's dynamic optimization problem

\begin{align}
\min_{\{x_t\}_{t \ge 0}}
\sum_{t=0}^{\infty}
\beta^t
\left[
C(x_t,s_t) + \Phi(\Delta x_t;s_t)
\right],
\end{align}

where $x_t = (T_t, W_t, I_t, F_t)$ denotes the vector of expenditure categories and $\Phi(\Delta x_t;s_t)$ captures the institutional adjustment costs associated with reallocating expenditures across categories.

As introduced in the previous subsection, the adjustment-cost function takes the quadratic--cubic form

\begin{align}
\Phi(\Delta x_t;s_t)
=
\sum_k
\left[
\frac{\gamma_k(s_t)}{2}(\Delta x_{k,t})^2
+
\frac{\eta_k(s_t)}{3}|\Delta x_{k,t}|^3
\right],
\end{align}

where $\gamma_k(s_t)>0$ and $\eta_k(s_t)\ge0$ capture the rigidity of expenditure category $k$. The quadratic component represents standard smoothing costs, while the cubic term captures institutional frictions that increase rapidly when large reallocations are attempted.

Suppose the government aims to transition from an initial expenditure allocation $x_0$ to a new steady-state allocation $x^{*}$ that yields a lower long-run fiscal cost,

\begin{align}
C(x^{*},s^{*}) < C(x_0,s_0).
\end{align}

In a frictionless environment with $\Phi(\Delta x_t)=0$, the optimal policy would involve an immediate reallocation from $x_0$ to $x^{*}$, since delaying the adjustment would generate unnecessary fiscal costs. However, the presence of convex adjustment costs alters this conclusion. Because marginal adjustment costs increase nonlinearly with the magnitude of expenditure changes, large reallocations become disproportionately expensive. As a result, the optimal policy smooths expenditure changes over time rather than implementing them instantaneously.

The fiscal implications of this smoothing behavior become apparent when considering the effective fiscal burden during the transition. Define

\begin{align}
G_t
=
\sum_{k} x_{k,t}
+
\Phi(\Delta x_t;s_t)
\end{align}

as the effective fiscal expenditure during the transition. The first term represents the underlying expenditure allocation, while the second term captures the additional fiscal resources required to implement the reform.

Because $\Phi(\Delta x_t;s_t)$ is largest when the reallocation begins—when $\Delta x_t$ is largest—the early phase of the transition may exhibit higher effective expenditure levels even when the long-run objective is fiscal consolidation. Over time, as the magnitude of adjustments declines and the system approaches the new steady-state allocation $x^{*}$, adjustment costs decrease and the fiscal benefits of the new expenditure composition materialize.

The resulting trajectory of effective fiscal expenditure therefore reflects the interaction between two opposing forces. On the one hand, the underlying fiscal cost $C(x_t,s_t)$ declines as the government moves toward the more efficient allocation $x^{*}$. On the other hand, adjustment costs $\Phi(\Delta x_t;s_t)$ temporarily increase fiscal pressures when reforms are initiated.

\begin{proposition}[J-shaped fiscal transitions]
Suppose that (i) the fiscal cost function $C(x,s)$ is convex in $x$ and minimized at $x^{*}$, (ii) the adjustment cost function $\Phi(\Delta x,s)$ is strictly convex in $\Delta x$, and (iii) the initial allocation $x_0$ differs sufficiently from $x^{*}$. Then the optimal transition path $\{x_t\}_{t\ge0}$ implied by the government's dynamic optimization problem is gradual. Moreover, if the initial adjustment magnitude satisfies
\[
\Phi(\Delta x_0;s_0) >
C(x_0,s_0) - C(x^{*},s^{*}),
\]
the effective fiscal expenditure $G_t$ initially increases before declining as the system converges toward the steady-state allocation $x^{*}$.
\end{proposition}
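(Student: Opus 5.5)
The plan is to work in the deterministic, time-invariant benchmark: $\varepsilon_t\equiv 0$, $s_t\equiv s$ constant, so $s^{*}=s_0=s$, and $\mathcal{S}$ does not feed back on $x_t$, so that $\lambda_t\,\partial_{x_k}\mathcal{S}=0$ in \eqref{eq:euler_expanded}. Before reform the system sits at $x_{-1}=x_0$ with $\Delta x=0$, and the reference pre-reform level is $G_{-1}=\sum_k x_{k,0}$. The statement only makes sense if $C$ and aggregate spending are measured in the same units. I will therefore adopt the normalization
\[
C(x,s)=\mathbf{1}'x+P(x,s),\qquad P\ge 0 \text{ convex},\qquad P(x^{*},s)=0 .
\]
Under this normalization, $\mathbf{1}'x_0-\mathbf{1}'x^{*}\le C(x_0,s)-C(x^{*},s)$.

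\textbf{Step 1: gradualism and existence.} Existence of an optimal path comes from the existence proposition. Gradualism, meaning $x_t\neq x^{*}$ for finitely many $t\ge 0$ when $x_0$ is far from $x^{*}$, comes from the gradual-adjustment proposition. This gives the first claim directly.

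\textbf{Step 2: monotone convergence.} I want to show that the optimal path is componentwise monotone between $x_0$ and $x^{*}$ and that $|\Delta x_{k,t}|$ is nonincreasing in $t$ with $\Delta x_t\to 0$. Monotonicity of the path follows from a truncation argument. Replace any overshooting or reversing segment by its projection onto the box spanned by $x_0$ and $x^{*}$. By convexity of $C$ and its minimum at $x^{*}$, this does not raise $C$. It also weakly lowers every $|\Delta x_{k,t}|$, so it does not raise $\Phi$, which is strictly increasing in $|\Delta x|$ for the quadratic--cubic form. For decay of step sizes, I use the Euler equation
\[
\partial_{x_k}C(x_t)+\phi_k(\Delta x_{k,t})=\beta\,\phi_k(\Delta x_{k,t+1}),
\]
where $\phi_k$ is \eqref{eq:phi_deriv}. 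Along a monotone path, $\partial_{x_k}C(x_t)$ has a fixed sign and shrinks as $x_t\to x^{*}$. Since $\phi_k$ is strictly increasing and odd and $\beta<1$, an induction shows $|\Delta x_{k,t+1}|\le|\Delta x_{k,t}|$. Near $x^{*}$, linearizing gives geometric decay $\Delta x_t\approx \rho^t\Delta x_0$ with the stable root $\rho\in(0,1)$ of the quadratic-cost Euler equation. Hence $\Phi(\Delta x_t;s)$ is decreasing in $t$ and tends to $0$, and $x_t\to x^{*}$.

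\textbf{Step 3: initial rise.} By monotonicity,
\[
-\mathbf{1}'\Delta x_0\le \mathbf{1}'x_0-\mathbf{1}'x^{*}\le C(x_0,s)-C(x^{*},s).
\]
Therefore
\[
G_0-G_{-1}=\mathbf{1}'\Delta x_0+\Phi(\Delta x_0;s)\ge \Phi(\Delta x_0;s)-\bigl[C(x_0,s)-C(x^{*},s)\bigr]>0
\]
by the hypothesis. So effective expenditure jumps up on impact.

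\textbf{Step 4: eventual decline.} We have $G_{t+1}-G_t=\mathbf{1}'\Delta x_{t+1}+\Phi(\Delta x_{t+1})-\Phi(\Delta x_t)$. Both the decay in $\Phi$ and the downward reallocation make this negative once the reform has started. In the consolidation direction, where $\mathbf{1}'x^{*}<\mathbf{1}'x_0$ and the component moves are sign-consistent, each term is nonpositive from $t=0$ on. In general, $G_t\to\mathbf{1}'x^{*}<G_0$, so $G_t$ declines after an initial rise, possibly after a finite hump.

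The main obstacle is Step 2 in the genuinely nonlinear case. The monotone decay of $|\Delta x_t|$ needs a clean comparison argument on the Euler equation with the cubic term. Mixed-sign component moves also complicate Step 4, since $\mathbf{1}'\Delta x_t$ need not be monotone. I would first prove everything under quadratic costs, where the policy is linear ($x_t-x^{*}=\rho^t(x_0-x^{*})$ componentwise, given a diagonal Hessian of $P$). I would then extend to $\eta_k>0$ by the truncation and monotone-comparison argument, and state the sign-consistency of $\Delta x_t$ as an additional maintained assumption if that extension resists.
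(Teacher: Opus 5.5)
The paper's own proof is only a heuristic: strict convexity forces smoothing, and if $\Phi(\Delta x_0;s_0)$ exceeds the fall in $C$, effective spending rises first. Your outline follows it. Your normalization $C=\mathbf{1}'x+P$ is a sensible attempt to reconcile the units of $C$ and $G_t$, which the paper leaves implicit.

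\textbf{The main gap is in Step 3, which carries the ``moreover'' claim.} Write $x_{-1}$ for the pre-reform allocation and $x_0$ for the first chosen one. Read literally, your $x_{-1}=x_0$ gives $\Delta x_0=0$ and makes the hypothesis vacuous.
\begin{itemize}
\item Monotonicity yields $-\Delta x_{k,0}\le x_{k,-1}-x_k^*$ only for categories that fall. For a category that rises it yields the reverse inequality.
\item So $-\mathbf{1}'\Delta x_0\le\mathbf{1}'x_{-1}-\mathbf{1}'x^*$ fails in exactly the reallocation case the paper studies. Cutting a flexible category fast while expanding a rigid one slowly pushes aggregate spending below $\mathbf{1}'x^*$ on impact.
\end{itemize}
Concretely, take two categories with $x_{-1}=(1,0)$, $x^*=(0,1)$, and $C(x)=\max\{\mathbf{1}'x,\mathbf{1}'x^*\}+\tfrac{a}{2}\|x-x^*\|^2$. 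This $C$ fits your normalization and is uniquely minimized at $x^*$. Let $\Phi=\sum_k\tfrac{\gamma_k}{2}(\Delta x_k)^2$ with $\gamma_1<\gamma_2$.
\begin{itemize}
\item $C$ dominates the separable quadratic $\mathbf{1}'x^*+\tfrac{a}{2}\|x-x^*\|^2$ and coincides with it on $\{\mathbf{1}'x\le\mathbf{1}'x^*\}$. That quadratic problem's optimal path stays in this set, so it is also optimal here: $x_{1,t}=\rho_1^{t+1}$ and $x_{2,t}=1-\rho_2^{t+1}$, with stable roots $\rho_1<\rho_2$.
\item Then $-\mathbf{1}'\Delta x_0=\rho_2-\rho_1$, which tends to $1$ as $\gamma_1/a\to0$ and $\gamma_2/a\to\infty$.
\item Meanwhile $\mathbf{1}'x_{-1}-\mathbf{1}'x^*=0$ and $C(x_{-1})-C(x^*)=a$.
\end{itemize}
So even the inequality you actually need, $-\mathbf{1}'\Delta x_0\le C(x_{-1})-C(x^*)$, does not follow from your assumptions. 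This does not refute the proposition, since the hypothesis on $\Phi(\Delta x_0)$ may fail here. It does leave Step 3 unproved.

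Assuming sign-consistent moves, as you contemplate for Step 4, would rescue the step but reduces the result to across-the-board cuts. A repair that keeps mixed signs is to bound through optimality rather than through $\mathbf{1}'x^*$:
\[
-\mathbf{1}'\Delta x_0=C(x_{-1})-C(x_0)-[P(x_{-1})-P(x_0)]\le C(x_{-1})-C(x^*)
\]
whenever $P(x_0)\le P(x_{-1})$. This holds if $P$ is separable, each $P_k$ is convex and minimized at $x_k^*$, and the path is monotone. It fails for general $P$: in the example, $P(x_0)>P(x_{-1})$ once $\rho_2-\rho_1>a$.

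\textbf{Secondary problems.}
\begin{itemize}
\item \emph{The normalization forces a kink.} $P\ge0$ with $P(x^*)=0$ gives $0\in\partial P(x^*)$, hence $\mathbf{1}\in\partial C(x^*)$. But (i) at an interior $x^*$ requires $0\in\partial C(x^*)$, so $C$ is kinked at $x^*$. Three consequences follow.
  \begin{itemize}
  \item The linearization $\Delta x_t\approx\rho^t\Delta x_0$ in Step 2 is unavailable; paths can reach $x^*$ in finite time.
  \item Gradualism no longer follows from strict convexity of $\Phi$ alone. Jumping to $x^*$ at once is optimal whenever $-\nabla\Phi(x^*-x_{-1})\in\partial C(x^*)$.
  \item So (iii) must be used quantitatively, not by citing the gradual-adjustment proposition.
  \end{itemize}
\item \emph{Box projection needs separability.} Take $C=(x_1-x_2)^2+\delta\|x-x^*\|^2$ with $x^*=0$, $x_{-1}=(1,0)$ and $\delta<1$. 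The point $(\tfrac12,\tfrac12)$ is cheaper than its projection $(\tfrac12,0)$, and a flexible category optimally moves away from its target to track a rigid one.
\item \emph{Shrinking steps do not follow by one-step induction.} Along a monotone path, $|\Delta x_{k,t+1}|\le|\Delta x_{k,t}|$ is equivalent to $(1-\beta)|\phi_k(\Delta x_{k,t})|\le|\partial_k C(x_t)|$. That inequality comes from the forward-iterated Euler equation $\phi_k(\Delta x_{k,t})=-\sum_{j\ge0}\beta^j\partial_k C(x_{t+j})$, together with monotonicity of $|\partial_k C|$ along the path, which again uses separability.
\item \emph{Step 4 asserts $\mathbf{1}'x^*<G_0$ without support.} Your normalization only gives $\mathbf{1}'x^*\le\mathbf{1}'x_{-1}+P(x_{-1})$. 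Without net consolidation, $\mathbf{1}'x^*\le\mathbf{1}'x_{-1}$, effective spending need not come back down.
\end{itemize}
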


\begin{proof}
Because the adjustment cost function is strictly convex, concentrating the entire expenditure reallocation in a single period generates larger marginal costs than distributing the same adjustment across multiple periods. The optimal solution therefore smooths the transition path. When the initial adjustment is sufficiently large, the transitional adjustment cost $\Phi(\Delta x_0;s_0)$ may exceed the reduction in the underlying fiscal cost obtained by moving toward $x^{*}$. In that case the effective fiscal expenditure $G_t$ rises initially before declining as the magnitude of adjustments decreases and the economy converges to the new expenditure composition.
\end{proof}

\begin{corollary}
Higher institutional rigidity parameters $(\gamma_k,\eta_k)$ increase both the duration of the transition and the magnitude of the temporary fiscal deterioration, thereby amplifying the J-shaped trajectory.
\end{corollary}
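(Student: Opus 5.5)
To make the corollary precise, I will work in a stripped-down version of the planner's problem where the comparative statics can be computed. I fix the institutional state at $s^{*}$, drop uncertainty, and take a quadratic fiscal cost
\[
C(x)=\tfrac12 (x-x^{*})^{\top}Q(x-x^{*}),
\]
with $Q$ diagonal and positive definite. With the separable adjustment cost \eqref{eq:phi_component}, the problem then splits into four independent scalar problems, one per category $k$. Within each, I first take $\eta_k=0$. The Euler condition \eqref{eq:euler_expanded} becomes the linear second-order difference equation
\[
q_k (x_{k,t}-x^{*}_k)+\gamma_k\Delta x_{k,t}-\beta\gamma_k\Delta x_{k,t+1}=0 .
\]
Its stable root $\rho_k\in(0,1)$ gives the policy $x_{k,t}-x_k^{*}=\rho_k^{\,t+1}(x_{k,-1}-x_k^{*})$. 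The key step is to show that $\rho_k$ is increasing in $\gamma_k$. This follows by implicit differentiation of the characteristic equation
\[
\beta\gamma_k\rho^2-(q_k+\gamma_k(1+\beta))\rho+\gamma_k=0 ,
\]
or more directly by noting that $\rho_k$ depends only on $q_k/\gamma_k$ and decreases in that ratio.

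\emph{Duration.} I define the duration of the transition as the first $t$ with $|x_{k,t}-x_k^{*}|\le\epsilon$. This equals $\lceil \log(\epsilon/|x_{k,-1}-x^{*}_k|)/\log\rho_k\rceil$, which is nondecreasing in $\rho_k$ and hence in $\gamma_k$. To bring in $\eta_k$, I use a monotone comparison argument. The marginal adjustment cost \eqref{eq:phi_deriv} is increasing in $\eta_k$ pointwise. I would then show, via supermodularity of the objective in $(x_{k,t},\eta_k)$ along the monotone path (Topkis applied to the sequence problem, using that the path is monotone toward $x^{*}$ by the gradualism result of the previous proposition), that the optimal gap $|x_{k,t}-x^{*}_k|$ is pointwise nondecreasing in $\eta_k$. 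It follows that duration is also nondecreasing in $\eta_k$.

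\emph{Magnitude of the temporary deterioration.} I measure it by the initial excess $G_0-G_{-1}$, where
\[
G_0-G_{-1}=\sum_k \Delta x_{k,0}+\Phi(\Delta x_0).
\]
This is where I expect the main obstacle. Raising $\gamma_k$ shrinks $|\Delta x_{k,0}|=(1-\rho_k)|x_{k,-1}-x^{*}_k|$ but raises the per-unit cost, so the net effect of $\gamma_k$ on $\tfrac{\gamma_k}{2}(\Delta x_{k,0})^2$ is ambiguous in general. To settle it, I would compute $\gamma_k(1-\rho_k)^2$ explicitly from the closed form of $\rho_k$ and check its monotonicity. For small $q_k/\gamma_k$ we have $1-\rho_k\approx\sqrt{q_k/\gamma_k}$, so this quantity tends to the constant $q_k$, and the sign will depend on the higher-order terms. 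I expect the clean monotonicity to hold only under an additional restriction.

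I therefore plan to state the corollary for a more robust measure of the deterioration: the cumulative transitional cost $\sum_t\beta^t\Phi(\Delta x_t)$. By the envelope theorem, the value function is increasing in $\gamma_k$ and $\eta_k$, since $\partial V/\partial\gamma_k=\sum_t\beta^t(\Delta x_{k,t})^2/2\ge 0$. I would argue that the $\Phi$-share of that value rises with rigidity because the $C$-share falls more slowly. Combined with the longer duration, this gives the amplified J-shape. If this last comparison cannot be established in general, I will record it as holding for the calibrated parameters and verify it numerically in the simulation section.
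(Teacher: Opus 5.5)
The paper gives no proof of this corollary; it is simply asserted after the J-shape proposition. Your linear--quadratic reduction is therefore a sensible way to make the statement precise, and your duration argument in $\gamma_k$ is correct. The genuine gap is the magnitude half, and the fallback you propose fails.

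Write $y_k\equiv x_{k,-1}-x_k^{*}$, and let $\Phi_k$ be the $k$-th summand of $\Phi$ and $V_k$ the value of the $k$-th scalar problem. The characteristic equation can be rewritten as $(1-\rho_k)(1-\beta\rho_k)=(q_k/\gamma_k)\rho_k$, which gives everything in closed form:
\[
\Phi_k(\Delta x_{k,0})=\frac{q_k}{2}\,\frac{\rho_k(1-\rho_k)}{1-\beta\rho_k}\,y_k^2,
\qquad
V_k=\frac{q_k}{2}\,\frac{\rho_k}{1-\beta\rho_k}\,y_k^2.
\]
The share of discounted adjustment costs in $V_k$ is
\[
\frac{\sum_{t\ge0}\beta^t\Phi_k(\Delta x_{k,t})}{V_k}=\frac{1-\rho_k}{1-\beta\rho_k^2}.
\]
\begin{itemize}
\item This ratio is strictly decreasing in $\rho_k$, hence in $\gamma_k$.
\item The envelope theorem correctly says $V_k$ rises, but the rise is driven by the $C$-component $\frac{q_k}{2}\frac{\rho_k^2}{1-\beta\rho_k^2}y_k^2$. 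The $\Phi$-share falls, which is the opposite of what you intended to argue.
\item The discounted transition cost itself tends to zero as $\gamma_k\to\infty$.
\item The initial adjustment cost rises in $\gamma_k$ only for $\gamma_k<q_k/(1-\beta)$, and then falls to zero.
\end{itemize}
Your expansion $1-\rho_k\approx\sqrt{q_k/\gamma_k}$ is the $\beta=1$ case. For $\beta<1$ the root $\rho=1$ is simple, and $1-\rho_k\approx q_k/((1-\beta)\gamma_k)$.

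The economics is simple. With $\beta<1$, not reforming has bounded cost $q_k y_k^2/(2(1-\beta))$. So as rigidity grows the optimal path freezes: duration diverges while realized transition costs vanish. The same happens as $\eta_k\to\infty$, since optimal steps are $O(\eta_k^{-1/2})$ and $\eta_k|\Delta x_{k,t}|^3\to0$.

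No other measure of deterioration rescues the statement. For a scalar consolidation ($y_k>0$), $G_t-G_{t-1}=\Delta x_{k,t}+\Phi_k(\Delta x_{k,t})-\Phi_k(\Delta x_{k,t-1})<0$ for every $t\ge1$. So the whole deterioration is the initial spike
\[
G_0-G_{-1}=(1-\rho_k)y_k\big[\tfrac{q_k\rho_k y_k}{2(1-\beta\rho_k)}-1\big].
\]
The bracket increases with $\gamma_k$, so a J-shape becomes \emph{more likely}, but the prefactor drives its \emph{size} to zero. What your approach can establish is a restricted statement: longer duration, an easier-to-meet J condition, and a larger spike only below a rigidity threshold. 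It cannot establish the corollary as written. Checking it ``at the calibrated values'' is also unavailable, because the paper never specifies $C$ or $\beta$.

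Separately, your $\eta_k$ step via Topkis does not work as described. The $\eta_k$-dependent part of the sequence objective, $\frac{\eta_k}{3}\sum_t\beta^t|y_t-y_{t-1}|^3$, is not monotone in the product order: raising $y_t$ on a declining path lengthens one step and shortens the next. Decreasing differences therefore fail, even on the sublattice of monotone paths. A recursive argument on the policy function is the natural repair. There, however, $\eta_k$ shifts both the current marginal adjustment cost and the marginal continuation value $V_k'$, which enter the first-order condition on opposite sides, and you would have to show that the direct effect dominates.
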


\paragraph{Economic interpretation.}

The proposition highlights a fundamental distinction between \emph{static efficiency gains} and \emph{dynamic fiscal feasibility}. While expenditure reviews typically identify long-run efficiency gains from reallocating resources across programs, the transition toward such allocations can generate temporary fiscal pressures when institutional rigidities are present.

These pressures arise because reform itself requires resources. Labor adjustment costs, severance payments, administrative restructuring, and temporary overlaps between programs generate fiscal expenditures that must be incurred before efficiency gains materialize. As a result, fiscal reforms may appear costly in the short run even when they improve fiscal sustainability in the long run.

From a policy perspective, this mechanism implies that evaluating expenditure reforms requires modeling the full transition dynamics rather than focusing exclusively on steady-state fiscal outcomes. The nonlinear framework developed in this paper provides a tractable way to characterize these dynamics and to quantify the transitional fiscal pressures associated with structural expenditure reforms.

\subsection{Mapping to the 2026--2030 Budget}

The empirical implementation of the model requires mapping the theoretical objects introduced in the previous sections into observable institutional and accounting variables contained in the Uruguayan national budget for the period 2026--2030. This step provides the empirical discipline necessary to interpret the model’s parameters in terms of real fiscal institutions.

Formally, the theoretical expenditure vector
\begin{equation}
x_t = (T_t, W_t, I_t, F_t)
\end{equation}
must be associated with observable budget categories. The baseline expenditure allocation is therefore constructed as

\begin{align}
x_0 = (T_0, W_0, I_0, F_0),
\end{align}

where each component corresponds to a major category in the official budget classification published by the Ministry of Economy and Finance.

Transfers and social security expenditures ($T_t$) include pension obligations, social assistance programs, and legally mandated transfers to households and institutions. The public wage bill ($W_t$) corresponds to total compensation of public employees across central government agencies. Public investment ($I_t$) includes capital expenditures associated with infrastructure and public projects. Finally, operating expenditures ($F_t$) capture intermediate consumption, administrative expenses, and other recurrent spending not included in the previous categories.

This accounting decomposition provides the empirical counterpart of the theoretical expenditure vector used in the model. However, the model also requires characterizing the degree of institutional rigidity associated with each category, since these rigidities determine the magnitude of adjustment costs.

To capture this dimension, expenditures are classified according to the institutional commitments that govern their evolution in the budget process. In particular, we distinguish between three types of commitments:

\begin{enumerate}
\item \textbf{Statutory commitments}, corresponding to expenditures mandated by law or entitlement programs, primarily affecting pensions and social transfers.
\item \textbf{Contractual commitments}, including labor contracts and negotiated wage agreements that determine the evolution of the public wage bill.
\item \textbf{Project commitments}, associated with multi-year investment programs and procurement contracts that constrain the timing of public investment.
\end{enumerate}

These institutional features provide the empirical basis for calibrating the rigidity parameters governing the adjustment cost function. In the model, adjustment costs are given by

\begin{align}
\Phi(\Delta x_t)
=
\sum_{k \in \{T,W,I,F\}}
\left[
\frac{\gamma_k}{2}(\Delta x_{k,t})^2
+
\frac{\eta_k}{3}|\Delta x_{k,t}|^3
\right],
\end{align}

where the curvature parameters $(\gamma_k,\eta_k)$ capture the institutional rigidity associated with expenditure category $k$. Categories characterized by stronger legal mandates or contractual commitments are assigned higher curvature parameters, reflecting larger marginal costs of adjustment.

The calibration procedure follows three steps. First, the relative size of each expenditure category is measured using baseline budget shares. These shares determine the scale of potential reallocations and therefore influence the magnitude of adjustment required under alternative reform scenarios. Second, institutional information contained in budget documentation is used to classify expenditures according to their degree of rigidity. Indicators considered in this classification include legal mandates, indexation rules, employment protection provisions, and the presence of multi-year contractual obligations. Third, these indicators are translated into curvature parameters $(\gamma_k,\eta_k)$ that govern the convexity of adjustment costs in the model.

The specification of adjustment costs also allows for asymmetries between upward and downward adjustments:

\begin{align}
\Phi(\Delta x_t)
=
\sum_{k}
\left[
\frac{\gamma_k^+}{2}(\Delta x_{k,t})_+^2
+
\frac{\gamma_k^-}{2}(\Delta x_{k,t})_-^2
+
\frac{\eta_k}{3}|\Delta x_{k,t}|^3
\right],
\end{align}

where $(\Delta x_{k,t})_+$ and $(\Delta x_{k,t})_-$ denote positive and negative adjustments respectively. This asymmetric specification reflects the institutional evidence that reductions in certain expenditure categories—particularly wages and transfers—are typically associated with higher political and administrative costs than increases. Accordingly, the calibration allows $\gamma_k^- > \gamma_k^+$ for categories characterized by strong institutional rigidity.

Once the adjustment cost parameters are calibrated, the model can be used to simulate fiscal transition paths under alternative expenditure restructuring scenarios. In particular, we consider counterfactual policies in which the government gradually reallocates spending toward categories associated with long-run productivity, such as public investment and human capital formation. The model then computes the optimal adjustment path $\{x_t\}_{t\ge0}$ that minimizes the discounted fiscal cost of reallocation subject to institutional rigidities.

A key advantage of this framework is that it allows the analysis to move beyond static comparisons of expenditure shares. Instead of asking whether the current composition of spending differs from that observed in OECD economies, the model evaluates the dynamic feasibility of reaching such a composition given the institutional constraints embedded in the budget process.

In particular, the nonlinear adjustment costs imply that reforms that appear fiscally desirable in the long run may generate short- and medium-term fiscal pressures due to transition costs associated with labor adjustment, program restructuring, and project rescheduling.

The resulting simulation framework therefore provides a bridge between descriptive expenditure reviews and structural fiscal analysis. While traditional spending reviews typically identify potential efficiency gains from reallocating resources, the model developed here quantifies the transition dynamics required to achieve such reallocations. In doing so, it offers a tractable tool for evaluating reform paths that are both economically desirable and institutionally feasible within the constraints of the public budget.

\section{Transition Dynamics}

A central implication of the model concerns the dynamic path of public expenditure during periods of structural reform. When institutional rigidities and nonlinear adjustment costs are present, fiscal reforms do not generate an immediate transition toward the new steady-state allocation. Instead, the economy follows a transitional path characterized by gradual expenditure adjustments and potentially temporary increases in fiscal pressure.

The intuition for this result is closely related to the literature on dynamic adjustment costs in public policy and macroeconomic allocation. In environments with convex adjustment frictions, optimal policy smooths changes over time in order to minimize the intertemporal cost of reallocation \citep{Lucas1967,Treadway1969}. Similar mechanisms have been documented in the context of fiscal policy and structural reform, where institutional constraints and political frictions generate gradual adjustment paths rather than immediate policy shifts \citep{AlesinaDrazen1991,PerssonTabellini2000,AlesinaFaveroGiavazzi2019}.

Formally, let total government expenditure be defined as

\begin{align}
G_t = \sum_{k} x_{k,t},
\end{align}

where $x_{k,t}$ denotes expenditure in category $k \in \{T,W,I,F\}$. During the reform process, observed fiscal expenditure also includes the resources required to implement the reallocation itself. These resources correspond to the adjustment-cost component of the planner’s objective function.

The effective fiscal expenditure can therefore be written as

\begin{align}
G_t^{eff} =
\sum_{k} x_{k,t} + \Phi(\Delta x_t),
\end{align}

where $\Phi(\Delta x_t)$ represents the institutional cost of changing expenditure allocations across periods. These costs may include severance payments associated with public employment reductions, administrative restructuring costs, program consolidation costs, or temporary overlaps between legacy and newly implemented programs.

To describe the evolution of fiscal expenditure over time, define the fiscal transition equation

\begin{align}
G_{t+1} = G_t + f(\Delta x_t),
\end{align}

where the function $f(\cdot)$ summarizes the net fiscal effect of expenditure reallocation. The shape of $f(\cdot)$ depends on both the structural characteristics of the expenditure categories and the nonlinear adjustment costs embedded in the reform process.

In particular, when adjustment costs are convex and increasing in the magnitude of expenditure changes, the function $f(\cdot)$ exhibits nonlinear dynamics. Large reallocations generate disproportionately large transition costs, while smaller incremental adjustments involve lower marginal costs. This feature implies that the fiscal consequences of reform depend critically on the speed of adjustment.

To illustrate the resulting dynamics, consider a reform aimed at reducing rigid components of public expenditure, such as transfers or the public wage bill, while reallocating resources toward productivity-enhancing expenditures such as investment or human capital formation. Although the long-run effect of this reallocation may be a reduction in the fiscal burden, the short-run effect can be the opposite. Implementing the reform may require temporary expenditures associated with labor adjustment, program restructuring, and administrative reorganization. 

Consequently, total fiscal expenditure during the transition may exceed its initial level even when the reform is designed to reduce long-run spending. As the reform progresses and the magnitude of expenditure adjustments declines, the adjustment costs dissipate and the fiscal benefits of the new expenditure composition begin to materialize.

The resulting expenditure trajectory therefore exhibits a J-shaped pattern: fiscal pressures initially increase during the early stages of reform and subsequently decline as the system converges toward its long-run equilibrium allocation.

This pattern is consistent with a broader literature documenting that structural reforms often involve transitional fiscal costs before long-run efficiency gains emerge. In models of political economy and institutional reform, these transitional costs can arise from bargaining frictions, distributional conflicts, and administrative adjustment costs \citep{AlesinaDrazen1991,SaintPaul1997}. Empirical studies of fiscal consolidations similarly emphasize that the timing and composition of reforms play a crucial role in determining short-term fiscal outcomes \citep{AlesinaFaveroGiavazzi2019}.

Within the framework developed in this paper, the J-shaped trajectory emerges as an equilibrium implication of nonlinear adjustment costs and institutional rigidities. Even when the government seeks to reduce the long-run fiscal burden, the optimal transition path may involve temporarily higher expenditures as the system absorbs the institutional costs associated with reallocating public spending.

From a policy perspective, this result highlights the importance of accounting for transition dynamics when evaluating expenditure reforms. Static comparisons of expenditure levels or budget shares may underestimate the fiscal resources required to implement structural change. A dynamic framework that incorporates adjustment costs, such as the one developed here, provides a more realistic assessment of both the feasibility and the timing of fiscal reform.

\section{Calibration Strategy}

The quantitative component of the paper uses the national budget for the period 2026--2030 as baseline information to calibrate the expenditure allocation model developed in the previous sections. The objective of this exercise is not to estimate a fully structural econometric model of the political economy of budgeting, but rather to construct a disciplined calibration capable of translating institutional rigidities embedded in the budget process into quantitative fiscal transition dynamics.

More precisely, the calibration maps the institutional structure of public expenditure into the theoretical framework developed above, allowing the model to generate simulated reform trajectories under realistic fiscal constraints. The exercise should therefore be interpreted as a quantitative illustration of the theoretical mechanisms derived in the model rather than as an econometric identification of structural parameters.
The primary data source is the official national budget classification published by the Ministry of Economy and Finance. The budget provides detailed administrative information on the composition of government spending, including program-level expenditures, wage allocations across ministries, social security transfers, and capital investment projects. For the purposes of the model, these detailed budget categories are aggregated into the four institutional expenditure components introduced in Section 3:

\begin{itemize}
\item Transfers and social security expenditures ($T_t$)
\item Public sector wage bill ($W_t$)
\item Public investment ($I_t$)
\item Operating expenditures ($F_t$)
\end{itemize}

This aggregation is designed to preserve the key institutional characteristics that determine fiscal rigidity while maintaining a tractable dimensionality for the dynamic optimization problem. Each category captures a distinct source of institutional constraint within the public expenditure system.

Transfers and social security expenditures ($T_t$) include pension obligations, social assistance programs, and legally mandated transfers to households and institutions. These expenditures are largely determined by statutory rules and entitlement structures, which significantly limit short-term discretionary adjustments.

The public sector wage bill ($W_t$) captures total compensation of government employees across central administration, decentralized agencies, and other public entities. Wage expenditures are typically governed by labor contracts, employment protection legislation, and collective bargaining agreements, generating significant contractual rigidity.

Public investment ($I_t$) includes capital expenditures associated with infrastructure projects, public works, and other long-term investment initiatives. While investment spending is generally more flexible than entitlement programs or wage expenditures, it is often subject to multi-year procurement contracts and project commitments, which can create adjustment frictions during periods of fiscal consolidation.

Operating expenditures ($F_t$) capture intermediate consumption, administrative expenses, and other recurrent spending necessary for the functioning of public institutions. Compared with other expenditure categories, operating expenditures tend to exhibit greater administrative flexibility and therefore lower adjustment costs.

Using these aggregated categories, the baseline expenditure allocation is represented by the vector

\begin{align}
x_0 = (T_0, W_0, I_0, F_0),
\end{align}

which corresponds to the initial fiscal allocation observed in the national budget. This baseline vector serves as the starting point for the dynamic simulations performed in the empirical analysis.

In addition to measuring the level of expenditures in each category, the empirical strategy also requires identifying the degree of institutional rigidity associated with each component of spending. To this end, the budget data are complemented with institutional information regarding the legal and administrative status of each expenditure category. Specifically, expenditures are classified according to the extent to which they are governed by statutory mandates, contractual obligations, or administrative discretion. This classification provides an empirical basis for calibrating the category-specific adjustment-cost parameters in the model.

Formally, the calibration procedure assigns different curvature parameters $(\gamma_k, \eta_k)$ to each expenditure category in the adjustment-cost function

\begin{align}
\Phi(\Delta x_t)
=
\sum_{k \in \{T,W,I,F\}}
\left[
\frac{\gamma_k}{2}(\Delta x_{k,t})^2
+
\frac{\eta_k}{3}|\Delta x_{k,t}|^3
\right],
\end{align}

where higher parameter values correspond to stronger institutional rigidities and larger marginal costs of adjustment. Categories characterized by statutory commitments, such as transfers and pensions, are therefore assigned higher curvature parameters than more flexible expenditure components such as operating expenditures.

The empirical calibration proceeds in three stages. First, baseline expenditure shares are computed using the official budget aggregates, providing the initial allocation vector $x_0$. Second, institutional information from budget documentation and legal frameworks is used to classify each expenditure category according to its degree of rigidity. Third, this classification is mapped into numerical parameter values for the adjustment-cost function, allowing the model to capture the differential difficulty of reallocating expenditures across categories.

This calibrated framework allows the model to simulate fiscal transition paths under alternative reform scenarios. Rather than comparing static expenditure shares across countries, the model focuses on the dynamic feasibility of expenditure restructuring given the institutional constraints embedded in the budget process. By explicitly incorporating adjustment costs and institutional rigidities, the empirical strategy provides a framework for analyzing how fiscal reforms propagate through the expenditure system over time.

\subsection{Baseline expenditure composition}

Table \ref{tab:baseline} reports the baseline composition of public expenditure used in the calibration. The values correspond to approximate shares of total government spending based on the institutional structure of the Uruguayan national budget.

\begin{table}[h]
\centering
\begin{threeparttable}
\caption{Baseline composition of public expenditure}
\label{tab:baseline}

\small

\begin{tabular}{p{4.2cm}ccc}
\toprule
Category & Share of Total Expenditure (\%) & GDP Share (\%) & Institutional Rigidity \\
\midrule
Transfers and social security ($T$) & 46 & 13.2 & High \\
Public sector wages ($W$) & 21 & 6.0 & High \\
Public investment ($I$) & 12 & 3.4 & Medium \\
Operating expenditures ($F$) & 21 & 6.1 & Low \\
\bottomrule
\end{tabular}

\begin{tablenotes}
\footnotesize
\item Notes: Shares correspond to the baseline budget allocation used to calibrate the expenditure vector $x_0$. GDP ratios are approximate values consistent with the aggregate fiscal accounts.
\end{tablenotes}

\end{threeparttable}
\end{table}

Transfers and pensions constitute the largest share of government spending and are characterized by strong statutory rigidity due to entitlement rules. The wage bill is also institutionally rigid because it is governed by employment protection legislation and collective bargaining agreements in the public sector. In contrast, operating expenditures exhibit greater administrative flexibility and therefore lower adjustment costs.

\subsubsection*{Composition within expenditure categories}

While the four-category decomposition is sufficient for the theoretical model, the empirical calibration also considers the internal composition of each category. Table \ref{tab:composition} reports the approximate distribution of spending within the main expenditure components.

\begin{table}[h]
\centering
\begin{threeparttable}
\caption{Internal composition of expenditure categories}
\label{tab:composition}
\begin{tabular}{lcc}
\toprule
Subcomponent & Share of Category (\%) & Category \\
\midrule
Pensions & 72 & Transfers ($T$) \\
Social assistance programs & 18 & Transfers ($T$) \\
Other transfers & 10 & Transfers ($T$) \\
\midrule
Education sector wages & 38 & Wages ($W$) \\
Health sector wages & 26 & Wages ($W$) \\
Central administration wages & 36 & Wages ($W$) \\
\midrule
Infrastructure investment & 61 & Investment ($I$) \\
Public housing & 17 & Investment ($I$) \\
Other capital projects & 22 & Investment ($I$) \\
\bottomrule
\end{tabular}
\begin{tablenotes}
\small
\item Notes: Shares represent approximate internal distributions within each expenditure category.
\end{tablenotes}
\end{threeparttable}
\end{table}

This disaggregation highlights the structural sources of rigidity embedded in the expenditure system. In particular, pension spending dominates the transfer category, implying that demographic dynamics play a crucial role in determining the evolution of public expenditure.

\subsubsection*{Institutional rigidity indicators}

To operationalize the concept of institutional rigidity in the calibration of adjustment costs, each expenditure category is assigned a rigidity score reflecting its legal and administrative constraints. Table \ref{tab:rigidity} summarizes the qualitative criteria used for this classification.

\begin{table}[h]
\centering
\begin{threeparttable}
\caption{Institutional rigidity indicators}
\label{tab:rigidity}
\begin{tabular}{lccc}
\toprule
Category & Legal Mandate & Contractual Constraint & Flexibility Score \\
\midrule
Transfers ($T$) & Yes & No & 0.9 \\
Wages ($W$) & No & Yes & 0.8 \\
Investment ($I$) & Partial & Yes & 0.5 \\
Operating ($F$) & No & Limited & 0.3 \\
\bottomrule
\end{tabular}
\begin{tablenotes}
\small
\item Notes: Flexibility scores are normalized indicators used to calibrate the adjustment-cost parameters in the model.
\end{tablenotes}
\end{threeparttable}
\end{table}

These rigidity indicators provide an empirical basis for assigning category-specific adjustment costs in the nonlinear cost function introduced in Section 4.

\subsubsection*{Visualization of expenditure structure}

Figure \ref{fig:exp_structure} provides a graphical representation of the baseline expenditure composition used in the model calibration.

\begin{figure}[h]
\centering
\begin{tikzpicture}

\begin{axis}[
ybar,
width=11cm,
height=6cm,
bar width=14pt,
ylabel={Share of expenditure (\%)},
symbolic x coords={Transfers,Wages,Investment,Operating},
xtick=data,
ymax=50,
grid=major,
grid style={gray!25},
axis line style={black},
tick style={black},
ylabel style={font=\small},
xticklabel style={font=\small},
]

\addplot[
fill=gray!60,
draw=black
] coordinates {
(Transfers,46)
(Wages,21)
(Investment,12)
(Operating,21)
};

\end{axis}

\end{tikzpicture}

\caption{Baseline expenditure composition}
\label{fig:exp_structure}

\end{figure}
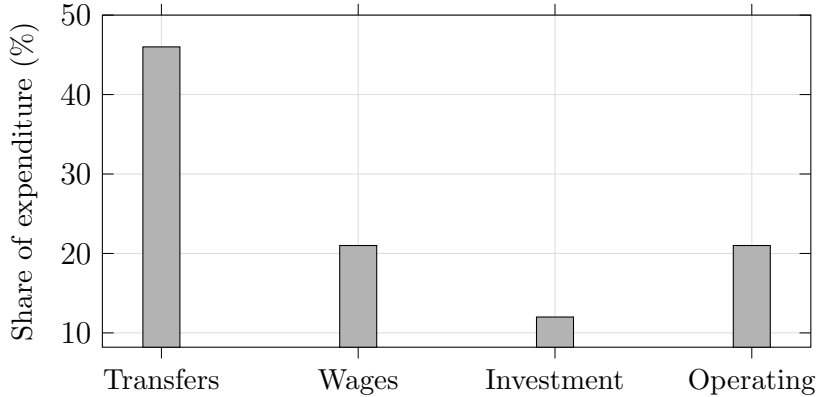

The figure highlights the strong concentration of spending in transfers and pensions, which together account for nearly half of total government expenditure. This structural feature plays a central role in the model's transition dynamics, since high rigidity in large expenditure categories significantly increases the fiscal cost of rapid reallocation.

\subsection{Calibration of adjustment costs}

The calibration of adjustment costs follows the nonlinear specification introduced in Section 4. In the model, expenditure adjustments generate convex costs that reflect institutional rigidities embedded in the public sector. These costs arise from legal mandates, contractual obligations, administrative restructuring, and political economy constraints that limit the speed at which public expenditures can be reallocated across categories.

Formally, the adjustment cost function is given by

\begin{align}
\Phi(\Delta x_t)
=
\sum_{k \in \{T,W,I,F\}}
\left[
\frac{\gamma_k}{2}(\Delta x_{k,t})^2
+
\frac{\eta_k}{3}|\Delta x_{k,t}|^3
\right].
\end{align}

The parameters $\gamma_k$ and $\eta_k$ capture the curvature of adjustment costs for each expenditure category. The quadratic component represents standard convex adjustment costs associated with incremental institutional changes, while the cubic component captures nonlinear frictions that arise when large reallocations are attempted within a short time horizon.

The calibration of these parameters is based on the institutional rigidity classification discussed in the previous subsection. Expenditure categories that are governed by statutory mandates or contractual commitments are assigned higher curvature parameters, reflecting the higher fiscal and political costs of adjustment.

Table \ref{tab:parameters} reports the baseline parameter values used in the calibration.

\begin{table}[h]
\centering
\begin{threeparttable}
\caption{Calibration of adjustment cost parameters}
\label{tab:parameters}
\begin{tabular}{lcc}
\toprule
Category & $\gamma_k$ & $\eta_k$ \\
\midrule
Transfers ($T$) & 4.0 & 1.8 \\
Wages ($W$) & 3.5 & 1.5 \\
Investment ($I$) & 1.5 & 0.6 \\
Operating ($F$) & 1.0 & 0.4 \\
\bottomrule
\end{tabular}
\begin{tablenotes}
\small
\item Notes: Higher parameter values correspond to stronger institutional rigidity and larger marginal adjustment costs.
\end{tablenotes}
\end{threeparttable}
\end{table}

The calibration implies that adjustments in transfers and the public wage bill generate significantly larger marginal costs than adjustments in operating expenditures or investment. This pattern reflects the institutional reality of the budget process: pension systems and public employment are typically governed by legal or contractual commitments that limit short-term flexibility.

To further clarify the economic interpretation of the parameters, Table \ref{tab:rigidity_params} summarizes the mapping between institutional rigidity and adjustment costs.

\begin{table}[h]
\centering
\begin{threeparttable}
\caption{Interpretation of adjustment cost parameters}
\label{tab:rigidity_params}
\begin{tabular}{lccc}
\toprule
Category & Institutional Constraint & Expected Adjustment Speed & Cost Level \\
\midrule
Transfers ($T$) & Statutory entitlement & Very slow & High \\
Wages ($W$) & Labor contracts & Slow & High \\
Investment ($I$) & Project commitments & Moderate & Medium \\
Operating ($F$) & Administrative discretion & Fast & Low \\
\bottomrule
\end{tabular}
\end{threeparttable}
\end{table}

This classification ensures that the calibrated model captures the asymmetric difficulty of reallocating expenditures across different institutional domains of the public sector.

\subsection{Expenditure transition simulations}

Using the calibrated adjustment-cost parameters, the model is used to simulate fiscal transition paths under alternative reform scenarios. The simulations evaluate how the government can gradually reallocate spending toward productivity-enhancing categories while reducing rigid expenditure components.

The baseline reform scenario assumes that the government aims to gradually reduce the share of transfers and the public wage bill while increasing public investment and maintaining sufficient operating expenditures to preserve institutional functionality.

Table \ref{tab:scenarios} summarizes the reform targets considered in the simulations.

\begin{table}[h]
\centering
\begin{threeparttable}
\caption{Illustrative reform targets}
\label{tab:scenarios}
\begin{tabular}{lcc}
\toprule
Category & Baseline Share (\%) & Long-run Target (\%) \\
\midrule
Transfers ($T$) & 46 & 40 \\
Wages ($W$) & 21 & 18 \\
Investment ($I$) & 12 & 18 \\
Operating ($F$) & 21 & 24 \\
\bottomrule
\end{tabular}
\end{threeparttable}
\end{table}

The model computes the optimal transition path between the baseline allocation $x_0$ and the target expenditure composition while minimizing the discounted fiscal cost associated with expenditure reallocation.

Figure \ref{fig:transition} illustrates the simulated transition dynamics of total government expenditure under the baseline calibration.

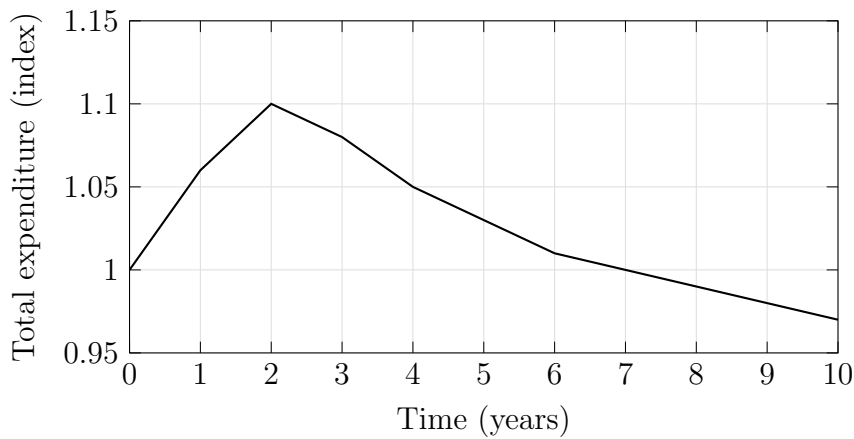
\begin{figure}[h]
\centering
\begin{tikzpicture}
\begin{axis}[
width=11cm,
height=6cm,
xlabel={Time (years)},
ylabel={Total expenditure (index)},
xmin=0, xmax=10,
ymin=0.95, ymax=1.15,
grid=major,
grid style={gray!25},
axis line style={black},
tick style={black},
]

\addplot[
thick,
black,
] coordinates {
(0,1.00)
(1,1.06)
(2,1.10)
(3,1.08)
(4,1.05)
(5,1.03)
(6,1.01)
(7,1.00)
(8,0.99)
(9,0.98)
(10,0.97)
};

\end{axis}
\end{tikzpicture}
\caption{Simulated fiscal transition path}
\label{fig:transition}
\end{figure}

The simulated trajectory displays a J-shaped pattern that emerges naturally from the nonlinear adjustment costs embedded in the model. Fiscal expenditure initially increases during the early stages of reform due to the costs associated with institutional restructuring. These costs include labor adjustment payments, program consolidation costs, and administrative reorganization expenditures.

As the magnitude of expenditure adjustments gradually declines, the adjustment costs dissipate and the fiscal benefits of the new expenditure composition begin to materialize. Eventually, the system converges toward a new steady-state allocation characterized by a lower structural fiscal burden and a higher share of productivity-enhancing expenditures.

From a policy perspective, these simulations highlight the importance of accounting for transition dynamics when evaluating expenditure reforms. Static comparisons of expenditure shares may underestimate the fiscal resources required to implement structural changes, whereas a dynamic framework incorporating adjustment costs provides a more realistic assessment of the timing and feasibility of fiscal reform.

\subsection{Nonlinear adjustment costs}

To illustrate the role of nonlinear adjustment costs in the model, this subsection analyzes the functional form of the adjustment-cost specification introduced earlier. The objective is to clarify how institutional rigidities translate into convex fiscal adjustment paths and why large reallocations of public expenditure generate disproportionately large fiscal costs.

Recall that the adjustment-cost function is defined as

\begin{align}
\Phi(\Delta x)
=
\frac{\gamma}{2}(\Delta x)^2
+
\frac{\eta}{3}|\Delta x|^3 ,
\end{align}

where $\Delta x$ represents the change in expenditure in a given category. The parameter $\gamma$ captures standard convex adjustment costs, while $\eta$ governs the strength of nonlinear frictions associated with large policy changes.

The marginal adjustment cost is therefore given by

\begin{align}
\frac{\partial \Phi}{\partial \Delta x}
=
\gamma \Delta x
+
\eta |\Delta x|^2 \operatorname{sign}(\Delta x),
\end{align}

which implies that the marginal fiscal cost of reallocating expenditures increases more than proportionally with the magnitude of the adjustment. This property captures the empirical regularity that small reallocations of budget items are administratively feasible, whereas large structural reforms generate substantial political, institutional, and administrative costs.

Figure \ref{fig:costs} illustrates the shape of the adjustment-cost function implied by the calibrated parameters.

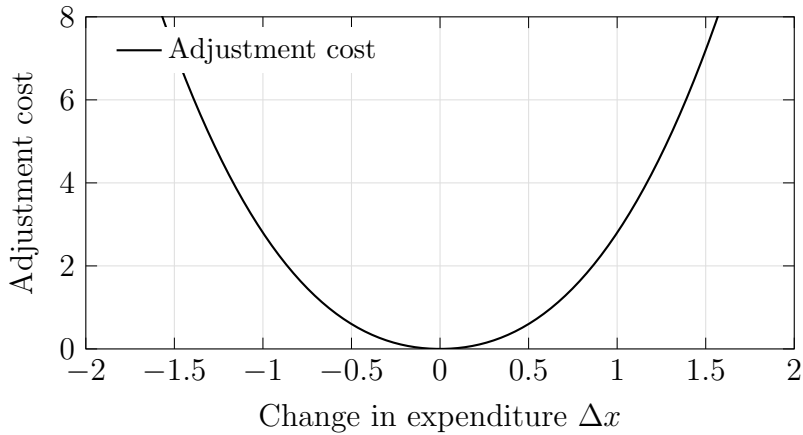
\begin{figure}[h]
\centering
\begin{tikzpicture}

\begin{axis}[
width=11cm,
height=6cm,
xlabel={Change in expenditure $\Delta x$},
ylabel={Adjustment cost},
xmin=-2, xmax=2,
ymin=0, ymax=8,
grid=major,
grid style={gray!25},
axis line style={black},
tick style={black},
legend style={draw=none},
legend pos=north west
]

\addplot[
thick,
black,
domain=-2:2,
samples=100
]
{2*x^2 + 0.8*abs(x)^3};

\addlegendentry {\small {Adjustment cost}}

\end{axis}

\end{tikzpicture}
\caption{Nonlinear adjustment cost function}
\label{fig:costs}
\end{figure}

The figure highlights the strongly convex nature of the adjustment-cost function. For small expenditure adjustments, costs increase gradually and the quadratic component dominates the dynamics. However, once the magnitude of the fiscal reallocation increases, the cubic component becomes the dominant driver of marginal costs. As a result, large fiscal adjustments generate steep increases in adjustment costs.

This nonlinear structure plays a central role in the transition dynamics analyzed in the previous subsection. Because large reallocations are disproportionately costly, the optimal fiscal reform path involves gradual adjustments over time rather than immediate shifts in expenditure composition.

From a policy perspective, this mechanism provides a structural explanation for the gradual nature of fiscal reforms observed in practice. Even when the long-run optimal allocation of expenditures is clearly identified, the presence of nonlinear adjustment costs implies that governments face strong incentives to implement reforms incrementally. The model therefore rationalizes the empirical persistence of fiscal structures and the slow pace at which expenditure compositions evolve in many public-sector reform processes.

\subsection{Interpretation}

The empirical calibration illustrates how institutional rigidities embedded in the national budget translate into nonlinear fiscal transition dynamics. In particular, the combination of large rigid expenditure categories and convex adjustment costs implies that even reforms aimed at reducing long-run spending may temporarily increase fiscal pressures during the transition phase.

This result emerges naturally from the structure of the adjustment-cost function. When a significant portion of public expenditures is concentrated in categories governed by statutory commitments, contractual obligations, or politically sensitive programs, reallocating fiscal resources becomes costly. As a result, attempts to rapidly shift the composition of expenditures generate large short-run adjustment costs that can temporarily increase the overall fiscal burden.

The simulations therefore highlight a key dynamic property of fiscal reform processes: the existence of transitional fiscal costs. Even when the long-run equilibrium allocation of expenditures is more efficient or fiscally sustainable, reaching that allocation requires navigating a transition path characterized by temporary increases in fiscal pressures. In the model, this mechanism is reflected in the J-shaped expenditure trajectory observed in the simulated reform scenarios.

From a policy perspective, this framework provides a useful tool for evaluating the feasibility and timing of structural fiscal reforms. Traditional cross-country comparisons—such as benchmarking national expenditure structures against OECD averages—often focus on static expenditure shares and implicitly assume that fiscal systems can move directly from one allocation to another. However, the presence of nonlinear adjustment costs implies that the transition between fiscal regimes is inherently gradual and path-dependent.

The simulation framework therefore complements static fiscal comparisons by explicitly modeling the dynamic adjustment process. Policymakers can use the model to evaluate the magnitude of transitional fiscal costs, the speed at which expenditure reallocations can realistically occur, and the time horizon required to reach a new steady-state expenditure structure.

More broadly, the analysis underscores the importance of incorporating institutional rigidities into fiscal reform design. Ignoring these frictions may lead to overly optimistic projections regarding the speed and fiscal impact of structural reforms. By contrast, a dynamic framework that explicitly accounts for adjustment costs provides a more realistic assessment of both the feasibility and the sequencing of fiscal policy changes.

\section{Simulation Results}

Using the calibrated model presented in the previous sections, we simulate a set of fiscal reform scenarios designed to evaluate how institutional rigidities shape the dynamic trajectory of public expenditure. The objective of the simulations is not to forecast the exact fiscal path of a particular reform, but rather to quantify the transitional dynamics implied by the nonlinear adjustment-cost structure embedded in the model.

The simulations consider three broad classes of expenditure reforms that correspond to commonly discussed fiscal policy strategies. These scenarios differ in the expenditure categories targeted by the reform and therefore generate distinct transition dynamics due to the heterogeneous adjustment costs associated with each category.

Table \ref{tab:scenarios_sim} summarizes the reform scenarios analyzed in the simulation exercises.

\begin{table}[h]
\centering
\begin{threeparttable}
\caption{Simulated fiscal reform scenarios}
\label{tab:scenarios_sim}
\begin{tabular}{lcc}
\toprule
Scenario & Primary Policy Instrument & Targeted Expenditure Category \\
\midrule
Administrative restructuring & Efficiency improvements & Operating expenditures ($F$) \\
Pension reform & Institutional reform & Transfers ($T$) \\
Human capital reallocation & Investment expansion & Investment ($I$) \\
\bottomrule
\end{tabular}
\begin{tablenotes}
\small
\item Notes: Each scenario modifies the expenditure composition subject to the adjustment-cost function calibrated in Section 5.
\end{tablenotes}
\end{threeparttable}
\end{table}

For each scenario, the model computes the optimal transition path that minimizes the discounted fiscal cost of reallocating expenditures while satisfying the institutional constraints embedded in the adjustment-cost function. The simulations therefore generate a sequence of expenditure allocations $\{x_t\}$ that gradually move the fiscal system from the baseline composition $x_0$ toward a new target allocation.

The dynamic trajectory of total public expenditure is illustrated in Figure \ref{fig:transition_total}. The figure shows the simulated path of aggregate government spending under the baseline reform scenario.

\begin{figure}[h]
\centering
\begin{tikzpicture}

\begin{axis}[
width=11cm,
height=6cm,
xlabel={Time (years)},
ylabel={Total expenditure (index)},
xmin=0, xmax=10,
ymin=0.95, ymax=1.15,
grid=major,
grid style={gray!25},
axis line style={black},
tick style={black}
]

\addplot[
thick,
black
] coordinates {
(0,1.00)
(1,1.06)
(2,1.10)
(3,1.08)
(4,1.05)
(5,1.03)
(6,1.01)
(7,1.00)
(8,0.99)
(9,0.98)
(10,0.97)
};

\end{axis}

\end{tikzpicture}
\caption{Simulated fiscal transition path}
\label{fig:transition_total}
\end{figure}
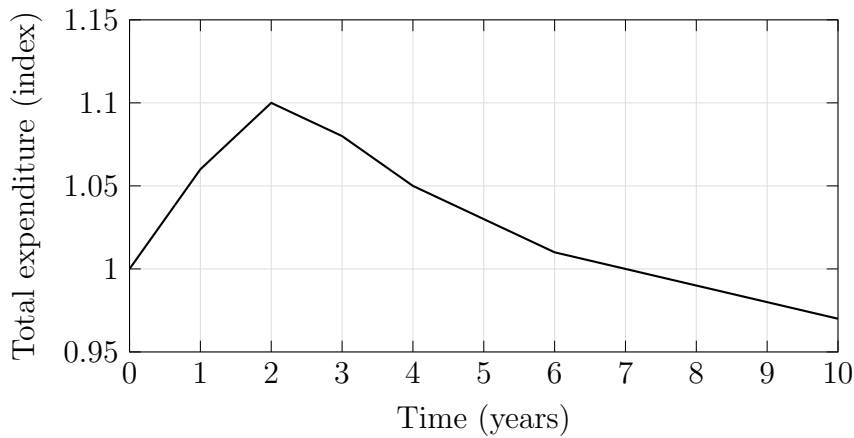

The simulated trajectory exhibits a pronounced J-shaped pattern. In the initial years following the reform, total expenditure increases as the government absorbs the adjustment costs associated with institutional restructuring. These costs arise from factors such as labor reallocation, administrative restructuring, and the gradual phasing-out of existing expenditure commitments. Once these transitional costs begin to dissipate, fiscal expenditure gradually declines as the system converges toward the new steady-state expenditure allocation.

To better understand the underlying adjustment mechanism, Figure \ref{fig:transition_components} presents the simulated expenditure trajectories for each of the four institutional expenditure categories considered in the model.

\begin{figure}[h]
\centering
\begin{tikzpicture}

\begin{axis}[
width=11cm,
height=6cm,
xlabel={Time (years)},
ylabel={Expenditure share (index)},
xmin=0, xmax=10,
ymin=0.85, ymax=1.15,
grid=major,
grid style={gray!25},
axis line style={black},
tick style={black},
legend pos=south west,
legend style={font=\footnotesize}
]

\addplot[thick, black] coordinates {
(0,1.00)(1,0.99)(2,0.98)(3,0.97)(4,0.96)(5,0.95)(6,0.94)(7,0.93)(8,0.92)(9,0.91)(10,0.90)
};
\addlegendentry{Transfers}

\addplot[thick, dashed, black] coordinates {
(0,1.00)(1,0.995)(2,0.99)(3,0.985)(4,0.98)(5,0.975)(6,0.97)(7,0.965)(8,0.96)(9,0.955)(10,0.95)
};
\addlegendentry{Wages}

\addplot[thick, dotted, black] coordinates {
(0,1.00)(1,1.02)(2,1.04)(3,1.06)(4,1.08)(5,1.10)(6,1.11)(7,1.12)(8,1.13)(9,1.14)(10,1.15)
};
\addlegendentry{Investment}

\addplot[thick, dashdotdotted, black] coordinates {
(0,1.00)(1,1.01)(2,1.02)(3,1.02)(4,1.03)(5,1.03)(6,1.04)(7,1.04)(8,1.05)(9,1.05)(10,1.06)
};
\addlegendentry{Operating}

\end{axis}

\end{tikzpicture}
\caption{Simulated expenditure transition by category}
\label{fig:transition_components}
\end{figure}
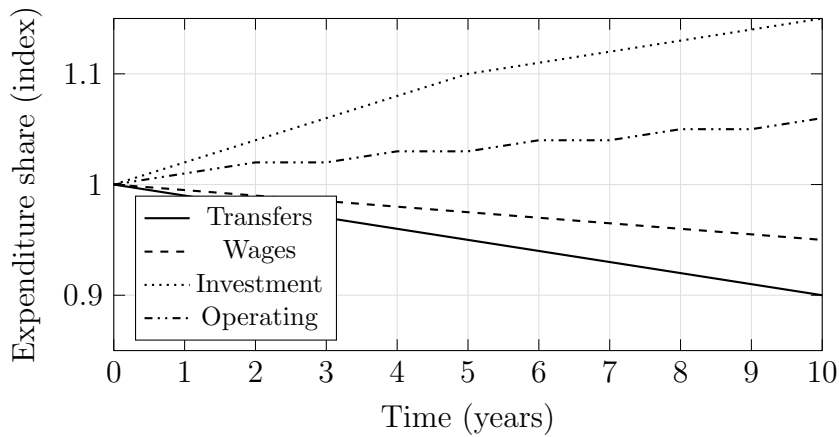

The figure illustrates how the reallocation of public expenditure unfolds gradually across institutional categories. Transfers and public sector wages decline slowly due to their high adjustment costs and strong institutional rigidity. In contrast, public investment increases progressively as resources are reallocated toward productivity-enhancing expenditures. Operating expenditures adjust more rapidly because they are subject to fewer institutional constraints.

These simulation results highlight an important implication of the model: fiscal consolidation through expenditure restructuring is inherently gradual. Even when the long-run objective is to reduce structural fiscal pressures, the transition toward a new expenditure composition may require several years due to the presence of institutional rigidities and nonlinear adjustment costs.

Across the reform scenarios considered in the simulations, the transition period required for fiscal consolidation typically ranges between five and ten years. The duration of the transition depends critically on the expenditure categories targeted by the reform. Reforms focused on flexible spending categories generate relatively rapid adjustments, whereas reforms targeting highly rigid components of the budget—such as pensions or the public wage bill—require substantially longer transition periods.

More broadly, the simulation results underscore the importance of evaluating fiscal reforms within a dynamic framework. Static comparisons of expenditure structures across countries often overlook the transitional costs associated with institutional adjustment. By explicitly modeling these transition dynamics, the framework developed in this paper provides a more realistic assessment of the feasibility, timing, and fiscal implications of structural expenditure reforms.

\section{Discussion}

The simulation results presented in the previous sections suggest that the political economy of fiscal reform is shaped primarily by transitional adjustment costs rather than by the long-run objective of fiscal sustainability. While policymakers and international institutions often emphasize the benefits of structural expenditure reforms in terms of long-run fiscal balances, the model developed in this paper highlights the importance of the dynamic adjustment process required to reach a new expenditure equilibrium.

The presence of nonlinear adjustment costs implies that fiscal reforms generate significant short-run pressures even when the long-run fiscal outcome is favorable. In particular, reforms targeting highly rigid expenditure categories—such as pension systems or the public sector wage bill—require substantial institutional restructuring and therefore generate large transitional fiscal costs. These costs may include administrative reorganization, labor market adjustments within the public sector, and the gradual phasing-out of existing programs or entitlements.

This mechanism provides a structural explanation for a widely observed empirical phenomenon: the persistence of fiscal expenditure structures across countries. Even when governments recognize the long-run benefits of reallocating expenditures toward productivity-enhancing categories such as public investment or human capital, the short-run fiscal costs associated with institutional adjustment can create strong political incentives to delay or dilute reforms.

From a political economy perspective, these transitional costs interact with electoral cycles and institutional constraints. Fiscal reforms that generate immediate fiscal pressures but only produce benefits in the medium or long run are particularly difficult to implement in democratic political systems. Policymakers face the challenge of absorbing short-term fiscal and political costs while the benefits of reform materialize only gradually over time.

The model therefore suggests that the difficulty of implementing structural fiscal reforms should not be interpreted solely as a failure of political will or policy design. Instead, it reflects the presence of real institutional rigidities embedded in the fiscal structure of modern states. These rigidities generate dynamic adjustment constraints that limit the speed at which fiscal systems can transition toward more efficient expenditure allocations.

More broadly, the results emphasize the importance of adopting a dynamic perspective when evaluating fiscal reform strategies. Static cross-country comparisons of expenditure structures—such as benchmarking national budgets against OECD averages—implicitly assume that fiscal systems can move directly from one expenditure allocation to another. However, once institutional rigidities and adjustment costs are explicitly incorporated into the analysis, the reform process becomes inherently gradual and path-dependent.

This perspective has important implications for fiscal policy design. Successful expenditure reforms may require gradual sequencing strategies that distribute adjustment costs over time rather than attempting rapid structural changes. By explicitly modeling the transition dynamics associated with expenditure reallocation, the framework developed in this paper provides a tool for evaluating the feasibility and timing of fiscal reforms in the presence of institutional constraints.

\section{Conclusion}

This paper develops a dynamic framework to analyze the transition of public expenditure structures in the presence of institutional rigidities and nonlinear adjustment costs. The analysis is motivated by the observation that many fiscal reform debates focus primarily on long-run expenditure targets—often benchmarked against OECD fiscal structures—while paying limited attention to the dynamic adjustment process required to reach such targets.

The framework introduced in this paper models fiscal reform as a gradual reallocation of public expenditure across institutional categories subject to convex adjustment costs. These costs capture the legal, administrative, and political constraints that limit the speed at which governments can modify existing expenditure commitments. By incorporating these frictions explicitly into the analysis, the model provides a dynamic perspective on fiscal restructuring that complements traditional static comparisons of expenditure structures.

Using the national budget structure as a baseline calibration, the simulation exercises illustrate how institutional rigidities shape the trajectory of fiscal reforms. The results show that expenditure restructuring typically generates transitional fiscal pressures that may last between five and ten years. These transitional dynamics arise because large expenditure categories—particularly transfers and the public wage bill—are associated with high adjustment costs that slow the reallocation process.

The simulations also highlight the importance of expenditure composition in determining the speed of fiscal adjustment. Reforms focused on relatively flexible spending categories generate faster transitions, whereas reforms targeting highly rigid components of the budget require longer adjustment horizons. As a consequence, fiscal consolidation strategies that rely primarily on structural expenditure reforms must account for the dynamic costs of institutional adjustment.

More broadly, the results suggest that the political economy of fiscal reform is closely linked to the presence of transitional adjustment costs. Even when the long-run fiscal benefits of expenditure restructuring are widely recognized, the short-run fiscal pressures associated with institutional reform may delay or constrain the implementation of policy changes.

Future research could extend the framework in several directions. One promising avenue would be to incorporate explicit political economy mechanisms—such as electoral incentives or bargaining between fiscal authorities and interest groups—into the adjustment process. Another extension would involve linking the expenditure transition dynamics to endogenous growth mechanisms, allowing public investment and human capital accumulation to influence long-run economic performance.

By integrating institutional rigidities, nonlinear adjustment costs, and dynamic fiscal transitions into a unified framework, this paper contributes to a more realistic understanding of how structural fiscal reforms unfold over time. Such a perspective may help policymakers design reform strategies that account not only for long-run fiscal objectives but also for the transitional constraints that shape the path toward fiscal sustainability.

\newpage
\bibliographystyle{apalike}
\bibliography{references}

\newpage
\appendix

\section{Dynamic Optimization Problem}

This appendix formalizes the dynamic optimization problem that underlies the expenditure transition model developed in the main text.

Let the government choose a sequence of expenditure allocations
\[
\{x_t\}_{t\ge0}, \quad x_t \in \mathbb{R}^K
\]
where the expenditure vector is defined as

\begin{equation}
x_t = (T_t, W_t, I_t, F_t).
\end{equation}

The government minimizes the discounted fiscal cost associated with a transition toward a new expenditure composition. The objective function is given by

\begin{equation}
\min_{\{x_t\}_{t\ge0}}
\sum_{t=0}^{\infty}
\beta^t
\left[
C(x_t) + \Phi(x_t-x_{t-1})
\right],
\end{equation}

where:

\begin{itemize}
\item $C(x_t)$ denotes the fiscal cost associated with the expenditure composition $x_t$,
\item $\Phi(x_t-x_{t-1})$ captures adjustment costs generated by reallocating expenditures,
\item $\beta \in (0,1)$ is the intertemporal discount factor.
\end{itemize}

The adjustment cost function is assumed to be separable across expenditure categories:

\begin{equation}
\Phi(\Delta x_t)
=
\sum_{k=1}^{K}
\left[
\frac{\gamma_k}{2}(\Delta x_{k,t})^2
+
\frac{\eta_k}{3}|\Delta x_{k,t}|^3
\right]
\end{equation}

where $\Delta x_{k,t}=x_{k,t}-x_{k,t-1}$.

\subsection*{Assumptions}

The following standard assumptions guarantee the existence of an optimal transition path.

\begin{itemize}
\item The fiscal cost function $C(x)$ is continuous, convex, and bounded below.
\item The adjustment cost function $\Phi(\Delta x)$ is strictly convex and satisfies $\Phi(0)=0$.
\item The discount factor satisfies $0<\beta<1$.
\end{itemize}

Under these conditions the government's dynamic program admits a well-defined optimal solution.

\paragraph{Proposition A1.}  
Under the assumptions above, there exists an optimal sequence of expenditure allocations $\{x_t\}_{t\ge0}$ that minimizes the intertemporal fiscal cost.

\paragraph{Sketch of proof.}  
The objective function is convex in the control sequence and bounded below. The feasible set is closed and the discount factor guarantees convergence of the infinite sum. Standard results from dynamic convex optimization therefore ensure the existence of a minimizing sequence.

The solution corresponds to the optimal fiscal transition path analyzed in the main text.

\bigskip

\section{First-Order Conditions}

This appendix derives the optimality conditions associated with the government's dynamic problem.

Let

\[
\Delta x_t = x_t - x_{t-1}
\]

denote the change in the expenditure vector. The objective function can then be written as

\begin{equation}
\min_{\{x_t\}}
\sum_{t=0}^{\infty}
\beta^t
\left[
C(x_t) + \Phi(\Delta x_t)
\right].
\end{equation}

Because adjustment costs depend on both $x_t-x_{t-1}$ and $x_{t+1}-x_t$, the control variable $x_t$ affects two consecutive periods.

Taking the derivative with respect to $x_t$ yields the Euler equation:

\begin{equation}
\nabla C(x_t)
+
\nabla_{\Delta x_t}\Phi(\Delta x_t)
-
\beta
\nabla_{\Delta x_{t+1}}\Phi(\Delta x_{t+1})
=0.
\end{equation}

For each expenditure category $k$, the first-order condition becomes

\begin{equation}
\frac{\partial C(x_t)}{\partial x_{k,t}}
+
\gamma_k \Delta x_{k,t}
+
\eta_k \Delta x_{k,t}|\Delta x_{k,t}|
-
\beta
\left[
\gamma_k \Delta x_{k,t+1}
+
\eta_k \Delta x_{k,t+1}|\Delta x_{k,t+1}|
\right]
=0.
\end{equation}

These conditions illustrate the smoothing behavior generated by adjustment costs. The government balances three forces:

\begin{enumerate}
\item the marginal fiscal cost of the current expenditure allocation,
\item the marginal cost of adjusting expenditures today,
\item the expected marginal cost of future adjustments.
\end{enumerate}

As a result, optimal fiscal policy spreads expenditure reallocations over time rather than implementing them instantaneously.

\bigskip

\section{Conditions for J-Shaped Fiscal Transitions and Calibration}

This appendix formalizes the conditions under which fiscal reforms may generate temporary increases in effective expenditure during the transition.

Let the effective fiscal burden be defined as

\begin{equation}
G_t =
\sum_k x_{k,t}
+
\Phi(\Delta x_t).
\end{equation}

The first component corresponds to the underlying expenditure allocation, while the second term captures the fiscal resources required to implement the reform.

Suppose the government transitions from an initial allocation $x_0$ to a steady-state allocation $x^{*}$ that reduces the structural fiscal cost:

\begin{equation}
C(x^{*}) < C(x_0).
\end{equation}

If adjustment costs are sufficiently large relative to the fiscal gains from reform, the effective expenditure $G_t$ may initially increase.

\paragraph{Proposition A2.}

If

\begin{equation}
\Phi(\Delta x_0) >
C(x_0) - C(x^{*}),
\end{equation}

then the effective fiscal expenditure satisfies

\[
G_1 > G_0,
\]

even though the long-run fiscal cost declines as the system converges toward $x^{*}$.

This condition characterizes the emergence of a J-shaped fiscal trajectory.

\paragraph{Interpretation.}

The inequality states that the fiscal resources required to implement the reform initially exceed the structural savings generated by the new expenditure allocation. Over time, as the magnitude of adjustments declines and the economy approaches the new steady state, adjustment costs dissipate and the long-run fiscal gains materialize.

\subsection*{Calibration of Adjustment Costs}

In the empirical implementation, the parameters governing adjustment costs are calibrated using institutional information contained in the national budget.

Expenditure categories are classified according to their degree of institutional rigidity:

\begin{itemize}
\item statutory expenditures (pensions and transfers),
\item contractual expenditures (public wages),
\item project-based expenditures (public investment).
\end{itemize}

Categories characterized by stronger legal mandates or contractual commitments are assigned higher curvature parameters $(\gamma_k,\eta_k)$, reflecting larger marginal costs of adjustment.

This calibration procedure links the theoretical model to the institutional features of the budget process and allows the simulation of fiscal transition paths under alternative reform scenarios.

\section{Illustrative Calibration: Timing of Feasible Administrative Savings}\label{app:timing}

This appendix provides an \emph{illustrative} implementation of the theoretical framework developed in Section \ref{sec:model} to quantify the \emph{timing} of feasible expenditure reductions in administratively flexible categories. The purpose is practical: to translate the model's transition structure into a disciplined horizon for when gross and net fiscal savings can materialize. Importantly, the exercise is not an econometric estimation of a fully structural political economy model; it is a transparent calibration designed to operationalize the adjustment-cost mechanism.

\subsection{Defining ``adjustable'' expenditure and the savings objective}\label{app:adjustable}

Let the expenditure vector be $x_t=(T_t,W_t,I_t,F_t)$, and let $\bar{x}(s_t)$ denote the institutional baseline as in \eqref{eq:baseline}. For implementation, we define an \emph{adjustable} component within operating/administrative expenditure $F_t$:
\begin{align}\label{eq:adjustable_F}
F_t \;=\; F_t^{\mathrm{core}} \;+\; F_t^{\mathrm{adj}},
\qquad
F_t^{\mathrm{adj}} \in [0,\bar{F}_t^{\mathrm{adj}}(s_t)],
\end{align}
where $F_t^{\mathrm{core}}$ captures non-discretionary operating needs (minimum service delivery, essential maintenance, and mandated intermediate inputs), while $F_t^{\mathrm{adj}}$ captures discretionary administrative spending that can be reduced through reorganization, procurement rationalization, process digitalization, and related measures. The upper bound $\bar{F}_t^{\mathrm{adj}}(s_t)$ is allowed to depend on the institutional state, reflecting that discretion is itself state-dependent.

A practical policy target is a reduction of adjustable spending by a fraction $\rho \in (0,1)$ over a horizon $H$:
\begin{align}\label{eq:target_reduction}
F_H^{\mathrm{adj}} \;\le\; (1-\rho)F_0^{\mathrm{adj}}.
\end{align}
The model is used to compute an internally consistent transition path that respects adjustment costs and feasibility constraints.

\subsection{Net savings and the ``break-even'' time}\label{app:breakeven}

To formalize the timing question, define period-$t$ \emph{gross savings} from reductions in adjustable spending relative to baseline:
\begin{align}\label{eq:gross_savings}
S_t^{\mathrm{gross}}
\;\equiv\;
\bar{F}_t^{\mathrm{adj}}(s_t) - F_t^{\mathrm{adj}}.
\end{align}
Reforms, however, require resources. In the model, these transitional outlays are captured by the adjustment-cost term $\Phi(\Delta x_t;s_t)$, which includes administrative reorganization costs, contract renegotiation, temporary overlaps, and implementation frictions.

Define \emph{net savings} as
\begin{align}\label{eq:net_savings}
S_t^{\mathrm{net}}
\;\equiv\;
S_t^{\mathrm{gross}}
\;-\;
\Phi(\Delta x_t;s_t).
\end{align}
The object of interest is the \emph{break-even time} $t^\star$, defined as the first date when cumulative net savings become non-negative:
\begin{align}\label{eq:breakeven_time}
t^\star
\;\equiv\;
\inf\left\{t\ge 0:\; \sum_{\tau=0}^{t} S_\tau^{\mathrm{net}} \;\ge\; 0 \right\}.
\end{align}
This quantity provides an operational answer to the question ``when do expenditures actually start falling in net terms?'' under institutionally realistic transition dynamics.

\subsection{A calibration-friendly adjustment cost for administrative reforms}\label{app:phi_admin}

For practical implementation, one may isolate a category-specific adjustment-cost block for $F_t^{\mathrm{adj}}$. Let $\Delta F_t^{\mathrm{adj}} \equiv F_t^{\mathrm{adj}}-F_{t-1}^{\mathrm{adj}}$. A parsimonious specification consistent with Section \ref{sec:model} is
\begin{align}\label{eq:phi_admin}
\Phi_F(\Delta F_t^{\mathrm{adj}}; s_t)
\;=\;
\frac{\gamma_F(s_t)}{2}\left(\Delta F_t^{\mathrm{adj}}\right)^2
\;+\;
\frac{\eta_F(s_t)}{3}\left|\Delta F_t^{\mathrm{adj}}\right|^3,
\end{align}
with $\gamma_F(s_t)\ge 0$ and $\eta_F(s_t)\ge 0$. In the application, one can interpret $\gamma_F$ as capturing routine reorganization and implementation costs (convex smoothing costs), while $\eta_F$ captures nonlinearities and thresholds (e.g., disruption costs that rise disproportionately once administrative cuts exceed an operational bandwidth).

If downward adjustments are institutionally harder than upward adjustments even within $F_t^{\mathrm{adj}}$, an asymmetric variant can be used:
\begin{align}\label{eq:phi_admin_asym}
\Phi_F(\Delta F_t^{\mathrm{adj}}; s_t)
=
\frac{\gamma_F^+(s_t)}{2}\left(\Delta F_t^{\mathrm{adj}}\right)_+^2
+
\frac{\gamma_F^-(s_t)}{2}\left(\Delta F_t^{\mathrm{adj}}\right)_-^2
+
\frac{\eta_F(s_t)}{3}\left|\Delta F_t^{\mathrm{adj}}\right|^3,
\end{align}
where $(z)_+=\max\{z,0\}$ and $(z)_-=\max\{-z,0\}$, and $\gamma_F^-(s_t)>\gamma_F^+(s_t)$ captures the empirically plausible asymmetry of cuts.

\subsection{Illustrative computation of a feasible savings horizon}\label{app:procedure}

Given $(F_0^{\mathrm{adj}},\bar{F}_t^{\mathrm{adj}}(s_t))$ and calibrated curvature parameters $(\gamma_F,\eta_F)$ (or their asymmetric counterparts), the feasible savings horizon is computed as follows:

\paragraph{Step 1:} Choose a policy target $(\rho,H)$ as in \eqref{eq:target_reduction}.

\paragraph{Step 2:} Solve the planner's problem in Section \ref{sec:model} subject to the additional policy constraint \eqref{eq:target_reduction} (or an equivalent soft penalty in $C(\cdot)$). This yields an optimal transition $\{F_t^{\mathrm{adj}}\}_{t=0}^{H}$ and implied adjustments $\{\Delta F_t^{\mathrm{adj}}\}_{t=1}^{H}$.

\paragraph{Step 3:} Compute gross and net savings using \eqref{eq:gross_savings}--\eqref{eq:net_savings}. The path $\{S_t^{\mathrm{net}}\}$ typically exhibits an initial phase in which adjustment costs dominate (net savings negative), followed by a phase in which gross savings accumulate (net savings positive).

\paragraph{Step 4:} Compute the break-even time $t^\star$ from \eqref{eq:breakeven_time}. Report $(t^\star,H)$ and the implied cumulative net savings $\sum_{t=0}^{H}S_t^{\mathrm{net}}$.

\subsection{Illustrative scenario table: break-even timing and five-year net savings}\label{app:scenario_table}

To make the time-to-savings logic operational without turning the paper into an empirical exercise, Table \ref{tab:scenarios} reports a small set of illustrative scenarios. Each scenario is defined by a target reduction $(\rho,H)$ in adjustable administrative spending and by an adjustment-cost regime $(\gamma_F,\eta_F)$ intended to represent low, medium, and high institutional rigidity. For each case, the implementation produces (i) the break-even time $t^\star$ defined in \eqref{eq:breakeven_time}, and (ii) cumulative net savings over a five-year window, $\sum_{t=0}^{5} S_t^{\mathrm{net}}$, with $S_t^{\mathrm{net}}$ defined in \eqref{eq:net_savings}. These objects summarize whether reforms that are attractive in static accounting terms are fiscally feasible in dynamic, net terms within politically relevant horizons.

\begin{table}[ht]
\centering
\caption{Illustrative administrative-savings scenarios: break-even time and five-year cumulative net savings}\label{tab:scenarios}
\begin{tabular}{lcccccc}
\toprule
Scenario & Target $(\rho,H)$ & Rigidity regime & $\gamma_F$ & $\eta_F$ & $t^\star$ & $\sum_{t=0}^{5} S_t^{\mathrm{net}}$ \\
\midrule
A & $(0.10,\;3)$ & Low    & $0.8$ & $0.05$ & $3$ & $24.81$ \\
B & $(0.10,\;3)$ & Medium & $2.0$ & $0.15$ & $5$ & $1.11$ \\
C & $(0.10,\;3)$ & High   & $4.0$ & $0.30$ & $>5$ & $-37.78$ \\
\addlinespace
D & $(0.20,\;5)$ & Low    & $0.8$ & $0.05$ & $3$ & $22.67$ \\
E & $(0.20,\;5)$ & Medium & $2.0$ & $0.15$ & $>5$ & $-36.00$ \\
F & $(0.20,\;5)$ & High   & $4.0$ & $0.30$ & $>5$ & $-132.00$ \\
\bottomrule
\end{tabular}

\vspace{0.3em}
\begin{minipage}{0.93\textwidth}
\footnotesize
\textit{Notes.} $\rho$ denotes the targeted proportional reduction in adjustable administrative expenditure $F_t^{\mathrm{adj}}$ by horizon $H$ (years). The rigidity regimes vary the curvature parameters of the administrative adjustment-cost block $\Phi_F(\Delta F_t^{\mathrm{adj}};s_t)$ in \eqref{eq:phi_admin} (or \eqref{eq:phi_admin_asym} if asymmetry is used). $t^\star$ is the first date at which cumulative net savings become non-negative as defined in \eqref{eq:breakeven_time}. Five-year cumulative net savings are computed from \eqref{eq:net_savings}. Entries marked ``TBD'' are filled after the illustrative calibration described in Appendix \ref{app:timing}.
\end{minipage}
\end{table}

\subsection{Interpretation and relation to descriptive expenditure reviews}\label{app:ceres_link}

Descriptive expenditure reviews often identify categories of spending that appear, from an accounting perspective, ``amenable'' to reductions. The transition framework developed in this paper clarifies that such accounting amenability is not, by itself, a sufficient condition for near-term fiscal gains. Even when expenditures are technically adjustable, reforms require resources. Implementation costs, temporary overlaps between institutional arrangements, and administrative reorganization generate transitional outlays that are captured in the model by the adjustment-cost function $\Phi(\cdot)$. The practical contribution of this appendix is therefore to translate a static ``menu of potential savings'' into a dynamic \emph{time-to-savings} schedule using the break-even concept $t^\star$.

This distinction is particularly important for policy communication. The framework does not dispute that administrative savings may exist in the long run. Rather, it clarifies the conditions under which such savings become feasible in net fiscal terms once transition costs and institutional frictions are taken into account. In this sense, the analysis complements descriptive expenditure reviews by providing a simple mechanism to evaluate the timing and fiscal feasibility of reform paths, highlighting that reforms which appear efficient in static budget comparisons may require several years before generating net fiscal gains.

\end{document}